\documentclass[runningheads]{llncs}

\usepackage[T1]{fontenc}

\usepackage{amsmath} 
\usepackage{graphicx}  
\usepackage{hyperref}  
\usepackage{color}

\urlstyle{rm}
\usepackage{doi}

\usepackage{subcaption} 
\usepackage{thmtools,thm-restate} 
\usepackage{cite}

\usepackage{enumitem}
\setlist[enumerate,1]{label=(\roman*)}
\usepackage{amsmath, amssymb, dsfont}
\usepackage{bbm}
\usepackage{standalone}
\usepackage{mathtools}
\newcommand{\N}{\mathbb{N}}

\newcommand{\Q}{\mathbb{Q}}
\newcommand{\R}{\mathbb{R}}
\DeclareMathOperator*{\pp}{\mathcal P}
\DeclareMathOperator*{\s}{\mathcal S}

\DeclareMathOperator{\sign}{sgn}

\DeclareMathOperator{\Eq}{eq}

\newcommand{\gradftop}[1]{\vec{\nabla\!f(#1)}\!^{\top}}

\usepackage{xcolor}
\definecolor{grayif}{gray}{0.7}

\usepackage{tikz}
\usetikzlibrary{arrows.meta,positioning,calc}

\usepackage[ruled,vlined]{algorithm2e}
\setlength{\algomargin}{6pt}
\usepackage{setspace}

\SetCommentSty{mycommfont}
\newcommand{\vgap}{\vspace{.2em}}
\SetKw{KwAnd}{and}
\SetKw{KwOr}{or}
\SetKw{KwNot}{not}
\DontPrintSemicolon
\SetKw{KwReturn}{return}
\SetKwFor{For}{for}{\!:}{endfor}
\SetKwFor{While}{while}{\!:}{endw}
\SetKwFor{ForEach}{for each}{\!:}{endfor}
\SetKwIF{If}{ElseIf}{Else}{if}{\!:}{else if}{else}{endif}
\SetKwComment{tcp}{(}{)}

\iftrue
\makeatletter
\renewcommand{\SetKwInOut}[2]{%
  \sbox\algocf@inoutbox{\KwSty{#2}\algocf@typo:}%
  \expandafter\ifx\csname InOutSizeDefined\endcsname\relax
    \newcommand\InOutSizeDefined{}\setlength{\inoutsize}{\wd\algocf@inoutbox}%
    \sbox\algocf@inoutbox{\parbox[t]{\inoutsize}{\KwSty{#2}\algocf@typo:\hfill}~}\setlength{\inoutindent}{\wd\algocf@inoutbox}%
  \else
    \ifdim\wd\algocf@inoutbox>\inoutsize%
    \setlength{\inoutsize}{\wd\algocf@inoutbox}%
    \sbox\algocf@inoutbox{\parbox[t]{\inoutsize}{\KwSty{#2}\algocf@typo:\hfill}~}\setlength{\inoutindent}{\wd\algocf@inoutbox}%
    \fi%
  \fi
  \algocf@newcommand{#1}[1]{%
    \ifthenelse{\boolean{algocf@inoutnumbered}}{\relax}{\everypar={\relax}}%
    {\let\\\algocf@newinout\hangindent=\inoutindent\hangafter=1\parbox[t]{\inoutsize}{\KwSty{#2}\algocf@typo:\hfill}~##1\par}%
    \algocf@linesnumbered
  }}%
\makeatother
\fi
\SetKwInOut{Input}{input}
\SetKwInOut{Output}{output}

\usepackage{lscape}  
\usepackage{afterpage}

\usepackage{orcidlink}

\renewcommand{\vec}[1]{\ensuremath{\boldsymbol{#1}}}
\renewcommand{\subset}{\subseteq}

\renewcommand{\epsilon}{\varepsilon}
\newcommand{\p}{\ensuremath{\textup{\textsf{P}}}\xspace}
\newcommand{\np}{\ensuremath{\textup{\textsf{NP}}}\xspace}
\newcommand{\fct}[1]{\mathcal{F}_{#1}}

\begin{document}
\title{An unconditional lower bound for the\\ active-set method on the hypercube}

\titlerunning{An unconditional lower bound for the active-set method on the hypercube}

\author{Yann Disser \orcidlink{0000-0002-2085-0454} \\
Nils Mosis \orcidlink{0000-0002-0692-0647}}

\authorrunning{Y. Disser and N. Mosis}

\institute{TU Darmstadt, Germany\\
\email{\{disser,mosis\}@mathematik.tu-darmstadt.de}}

\maketitle          

\begin{abstract}
The existence of a polynomial-time pivot rule for the simplex method is a fundamental open question in optimization. 
While many super-polynomial lower bounds exist for individual or very restricted classes of pivot rules, there currently is little hope for an unconditional lower bound that addresses all pivot rules.
We approach this question by considering the active-set method as a natural generalization of the simplex method to non-linear objectives.
This generalization allows us to prove the first unconditional lower bound \emph{for all pivot rules}.
More precisely, we construct a multivariate polynomial of degree linear in the number of dimensions such that the active-set method started in the origin visits all vertices of the hypercube.
We hope that our framework serves as a starting point for a new angle of approach to understanding the complexity of the simplex method.

\keywords{simplex method  \and active-set method \and lower bound \and linear programming \and non-linear programming.}

\end{abstract}

\section{Introduction}

The simplex method~\cite{dantzig1955generalized} is widely regarded as one of the most natural and important practical algorithms for solving linear programming problems. 
However, its theoretical complexity remains an open question to this day. 
This has to do with the fact that the simplex method’s behavior is highly sensitive to the choice of pivot rule, which determines how ties between improving edges at a vertex are broken and significantly influences the method’s behavior, such as whether it guarantees termination.

The discovery of a polynomial-time pivot rule for the simplex method would have profound implications. 
In particular, it would most likely provide a strongly polynomial algorithm for linear programming, thereby resolving the ninth open problem on Smale’s list of mathematical problems for the 21st century~\cite{smale2000mathematical}. 
While weakly polynomial algorithms for linear optimization are known, namely the ellipsoid and the interior-point method~\cite{kozlov1979polynomial,karmarkar1984new}, and while the latter has recently been adapted to a strongly polynomial algorithm for linear programs with at most two variables per constraint~\cite{allamigeon2022interior,dadush2024strongly}, barrier methods cannot yield strongly polynomial guarantees~\cite{allamigeon2022no}.
For linear programs with zero-one vertices strongly polynomial simplex pivot rules are known~\cite{black2024simplex}.
A general polynomial-time pivot rule for the simplex method would also imply a polynomial bound on the combinatorial diameter of polytopes, thus resolving the long-standing polynomial Hirsch conjecture~\cite{dantzig1963linear,santos2012counterexample}.

Over the years, super-polynomial lower running time bounds were established for various natural pivot rules. 
Bounds for straightforward rules were found early-on by utilizing distorted hypercubes~\cite{avis1978notes,murty1980computational,klee1972good,jeroslow1973simplex,goldfarb1979worst,black2024exponentiallowerboundspivot}.
More recently, linear programs derived from Markov Decision Processes have been used to handle more sophisticated, memory-based rules~\cite{disser2023exponential,avis2017exponential,friedmann2011subexponential_1,friedmann2011subexponential_2,friedmann2011subexponential_3}.
In some cases, the exponential worst-case behavior of the simplex method could be attributed to its mightiness~\cite{disser2018simplex,fearnley2015complexity,adler2014simplex}, which can be seen as a conditional lower bound.
Nevertheless, results from smoothed complexity~\cite{huiberts2023upper,dadush2018friendly,deshpande2005improved,spielman2004smoothed} suggest that the constructions underlying the above bounds are fragile, which indicates that fundamentally different approaches may be necessary to achieve an unconditional lower bound that applies universally across all pivot rules.
While existing hypercube constructions were generalized to small classes of pivot rules~\cite{amenta1999deformed,disser2023unified}, the existence of an unconditional super-polynomial lower bound remains wide open.

We approach this question by considering the active-set method~\cite{fletcher2000practical,nocedal1999numerical} as a natural generalization of the simplex method to non-linear programs.
A case of particular importance for the active-set method are quadratic, concave programs for which, much like in the linear setting, only weakly polynomial algorithms are known~\cite{kozlov1979polynomial,ye1989extension}, and the active-set method is a promising candidate for the first strongly polynomial algorithm.
Similarly to the simplex method, the active-set method is governed by a pivot rule that determines the improving direction in each step in case there is more than one option.
Again, no general running time bounds for all pivot rules are known.

\subsubsection{Our results.}
We prove the first unconditional super-polynomial lower bounds on the running time of the active-set method \emph{for all pivot rules}.
More precisely, we show the following.

\begin{theorem}\label{thm:main}
For all~$n\in\N_{>2}$, there is a multivariate polynomial~$\fct n$ of degree~$n$ such that the active-set method started in~$\vec{0}$ needs~$2^n-1$ iterations to optimize~$\fct n$ over the~$n$-dimensional hypercube~$[0,1]^n$, irrespective of the pivot rule.
\end{theorem}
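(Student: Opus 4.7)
My plan is to construct $\mathcal{F}_n$ inductively so that, started at $\vec 0$, the active-set method is forced to traverse the binary reflected Gray code on $\{0,1\}^n$ regardless of the pivot rule. Since the reflected Gray code is a Hamiltonian path on the hypercube, provided that the gradient of $\mathcal{F}_n$ at each visited vertex uniquely picks out the next Gray code coordinate to flip, every pivot rule must execute exactly $2^n - 1$ iterations.

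Writing $\gamma_0 = \vec 0, \gamma_1, \ldots, \gamma_{2^n-1}$ for the reflected Gray code, I would exploit its recursive structure: the first $2^{n-1}$ vertices form the $(n{-}1)$-dimensional reflected Gray code on the face $x_1 = 0$ and end at $(0, 1, 0, \ldots, 0)$; the middle step flips $x_1$; and the last $2^{n-1}$ vertices form the reverse sub-code on the face $x_1 = 1$. The involution $\gamma_{2^{n-1}-1-j}^{(n-1)} = \gamma_j^{(n-1)} \oplus e_1$ of the reflected Gray code identifies the reverse sub-code as the image of the forward sub-code under the substitution $x_2 \mapsto 1 - x_2$. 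Assuming inductively that $\mathcal{F}_{n-1}$ forces the sub-code in dimension $n-1$, I would take as ansatz
$$
\mathcal{F}_n(\vec x) \;=\; (1 - x_1)\,\mathcal{F}_{n-1}(x_2, \ldots, x_n) \;+\; x_1\,\mathcal{F}_{n-1}(1 - x_2, x_3, \ldots, x_n) \;+\; x_1 \cdot S_n(x_2, \ldots, x_n),
$$
where $S_n$ is a multilinear polynomial in $(x_2, \ldots, x_n)$ whose prescribed values on the $2^{n-1}$ vertices of the sub-hypercube tune $\partial \mathcal{F}_n / \partial x_1$ so that flipping $x_1$ is strictly improving at the transition vertex $(0, 1, 0, \ldots, 0)$ and non-improving at every other visited vertex. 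Since a multilinear polynomial in $n-1$ variables is uniquely determined by its $2^{n-1}$ vertex values, $S_n$ exists, has degree at most $n-1$, and keeps $\mathcal{F}_n$ at total degree $n$. The base case $n = 3$ is verified by direct construction.

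The main obstacle is to satisfy every gradient-sign constraint simultaneously with this single polynomial. The raw $x_1$-derivative $\mathcal{F}_{n-1}(v \oplus e_1) - \mathcal{F}_{n-1}(v)$ contributed by the first two terms has an intricate sign pattern across the sub-hypercube: positive on the first quarter and negative on the second quarter of sub-code indices, mirroring the interleaving of the Gray code. The values of $S_n$ must therefore precisely cancel this pattern and additionally fire at the transition vertex. Moreover, once these vertex values are pinned down, the gradient contributions of $S_n$ in the remaining coordinates on the face $x_1 = 1$ are also determined, and one must verify that the combined gradient $\nabla\bigl(\mathcal{F}_{n-1}(1 - x_2, \ldots) + S_n\bigr)$ still selects the reverse sub-code at every vertex. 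This forces a strengthened inductive hypothesis that tracks the values and coordinate-wise gradient differences of $\mathcal{F}_{n-1}$ at every vertex of the sub-hypercube, not merely along the sub-path; the subtle interplay between the two layers of the recursion is where the technical work is concentrated.
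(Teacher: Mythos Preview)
There is a structural obstruction in your ansatz that no strengthened inductive hypothesis can repair. Since $A$, $B$, and $S_n$ depend only on $(x_2,\dots,x_n)$, your $\mathcal{F}_n=(1-x_1)A+x_1B+x_1S_n$ is affine in $x_1$, so $\partial_1\mathcal{F}_n=B-A+S_n$ takes the \emph{same} value at $(0,v)$ and at $(1,v)$ for every $v\in\{0,1\}^{n-1}$. The reflected Gray code visits both of these vertices, and at each (except the single transition vertex $v^\star$) the move along $\pm\vec{e^1}$ must fail to be improving; combining the two sign conditions forces $\partial_1\mathcal{F}_n(v)=0$, i.e.\ $S_n(v)=A(v)-B(v)$, for every $v\neq v^\star$. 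But then on the face $x_1=1$ one has $\mathcal{F}_n=B+S_n=A$ at every $v\neq v^\star$, while $\mathcal{F}_n(1,v^\star)=A(v^\star)+c$ with $c>0$. Since $v^\star$ is already the maximiser of $A=\mathcal{F}_{n-1}$ on the sub-cube, the vertex $(1,v^\star)$ reached immediately after the transition has no improving edge whatsoever, and the active-set method halts there after only $2^{n-1}$ iterations; the reverse sub-code on $\{x_1=1\}$ is never traversed. The ``interplay'' you flag is therefore not a matter of bookkeeping---the affine-in-$x_1$ shape of the ansatz makes the two layers of constraints outright incompatible.

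The paper escapes exactly this trap by abandoning linearity along coordinate axes: to a multilinear Gray-code part $\sum_i 2^{i-1}\alpha_i$ it adds the quadratic corrections $\beta_{n,i}(\vec x)=2^{i}(x_i-x_i^2)\bigl(1-x_{i-1}+\sum_{j\le i-2}x_j\bigr)$, which vanish at every vertex of the cube yet contribute the decisive term to $\partial_i\mathcal{F}_n$ there. This decouples the gradient-sign constraints from the vertex-value constraints and is what makes a unique improving edge possible at all $2^n-1$ non-optimal vertices simultaneously (Lemma~\ref{lemma: k-improvable iff} and Proposition~\ref{proposition: exactly one}); Proposition~\ref{proposition: move to next vertex} then checks that the derivative stays constant along that one edge. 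If you want to salvage the recursive scheme you will need a correction of this kind---non-linear in the coordinate being flipped and vanishing on $\{0,1\}^n$---in place of, or alongside, the multilinear $S_n$.
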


In particular, our construction forces the active-set method to visit all vertices of the hypercube while traversing along its edges in a simplex-like fashion. 

This result yields a new method of approach for unconditional lower bounds on the running time of the simplex method:
If we could lower the degree of the polynomial to~$1$ by considering more general polytopes, we would obtain a super-polynomial (even exponential) lower bound for the simplex method using any pivot rule.

As a first step in this direction, we can lower the degree of the polynomial, while maintaining a super-polynomial bound, simply by neglecting most of the input dimensions.

\begin{corollary}\label{cor:main2}
For every~$d(n)=\omega(\log n)$ with~$d(n)\leq n$, there are polynomials~$(\fct n)_{n \in \N}$ of degrees~$\mathcal{O}(d(n))$ such that the active-set method started in~$\vec{0}$ needs~$2^{\omega(\log n)}$ iterations to optimize~$\fct n$ over the~$n$-dimensional hypercube~$[0,1]^n$, irrespective of the pivot rule.
\end{corollary}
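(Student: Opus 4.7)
The plan is to reduce Corollary~\ref{cor:main2} to Theorem~\ref{thm:main} via a dimensional-embedding argument: start from the construction of Theorem~\ref{thm:main} in dimension $d(n)$ and extend it trivially to $[0,1]^n$ by making the polynomial constant in the extra coordinates. A projection argument will then transfer the lower bound.

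Concretely, I would set $d := \ceil{d(n)}$ and define the corollary polynomial by
\[
\fct n(x_1,\ldots,x_n) := g(x_1,\ldots,x_d),
\]
where $g$ is the polynomial of degree $d$ supplied by Theorem~\ref{thm:main}. Then $\fct n$ has degree $d = \mathcal{O}(d(n))$ in $n$ variables, so the degree bound is immediate. For the iteration bound, I would fix an arbitrary pivot rule and an arbitrary active-set execution on $\fct n$ over $[0,1]^n$ starting at $\vec{0}$. Since each iteration moves along an edge of the hypercube and hence flips a single coordinate, I would project the trajectory onto the first $d$ coordinates, removing stuttering caused by iterations that flip a coordinate in $\{d+1,\ldots,n\}$.

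The key observation driving the argument is that $\fct n$ is constant in its last $n-d$ coordinates, so its value, gradient, and all higher partial derivatives at a point $(y,z)\in[0,1]^d\times[0,1]^{n-d}$ agree with those of $g$ at $y$. Consequently, an edge-move flipping a coordinate in $\{1,\ldots,d\}$ is improving for $\fct n$ if and only if the analogous edge-move is improving for $g$, and the active-set information restricted to the first $d$ coordinates coincides with that of the smaller problem. Hence the de-stuttered projected trajectory is a valid active-set execution on $g$ over $[0,1]^d$ starting at $\vec{0}$ under some induced pivot rule, and Theorem~\ref{thm:main} guarantees that this execution has length at least $2^d - 1$. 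Since projection cannot increase the iteration count, the original $n$-dimensional execution uses at least $2^d - 1 = 2^{\omega(\log n)}$ iterations, as required.

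The main point I expect to be delicate is verifying that the projection of an active-set step is itself an admissible active-set step in the $d$-dimensional problem. Intuitively this is forced by the coordinate invariance of $\fct n$, but a careful check must ensure that stuttering iterations in the extra coordinates cannot subtly alter the active-set state on the first $d$ coordinates in a way that would enable a shortcut; this in turn depends on the precise formalism of the pivot rule adopted in the proof of Theorem~\ref{thm:main}.
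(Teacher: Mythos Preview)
Your reduction is exactly the paper's: set $\fct n(x_1,\ldots,x_n):=\fct{\lfloor d(n)\rfloor}(x_1,\ldots,x_{\lfloor d(n)\rfloor})$ (the paper uses the floor rather than the ceiling) and invoke Theorem~\ref{thm:main}. The paper's proof is two lines, observing only that ``the additional dimensions do not affect its behavior.''

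The projection/de-stuttering layer you propose, and your concern that it might be ``delicate,'' are unnecessary. The observation you are missing is that \textsc{ActiveSet} can \emph{never} take a step in a coordinate $j>d$: since $\partial_j\fct n\equiv 0$, the edge direction $\pm\vec{e^j}$ is not strictly improving and hence never lies in $\mathcal D$; and any feasible improving direction with a nonzero $j$-th component keeps strictly fewer active constraints at equality than its projection onto the first $d$ coordinates, so it is never selected by the rule maximizing $|\{i\in\mathcal A:\vec{A_{i\cdot}}\vec d=0\}|$. Thus the unique-improving-edge property of Proposition~\ref{proposition: exactly one} persists verbatim in $[0,1]^n$, the trajectory is forced regardless of pivot rule, and there is no stuttering to remove: the $n$-dimensional run is literally the $\lfloor d(n)\rfloor$-dimensional one padded with zeros, taking exactly $2^{\lfloor d(n)\rfloor}-1$ iterations.
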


Importantly, we provide a single construction that addresses all pivot rules simultaneously (there is a unique improving direction in each step).
This means that lowering the degree of the polynomial for general polytopes would constitute progress towards disproving the monotone polynomial Hirsch conjecture~\cite{ziegler2012lectures}.

We observe that it is \np-complete to optimize polynomials of degree~3 over the hypercube (Proposition~\ref{prop: NP complete}).
This already implies a super-polynomial lower bound on the running time of the active-set method, assuming $\p \neq \np$.
The value of our contribution lies in the fact that we provide an \emph{unconditional} lower bound.

In fact, it turns out that the polynomials of Theorem~\ref{thm:main} and Corollary~\ref{cor:main2} induce decomposable unique sink orientations of the hypercube (Proposition~\ref{prop: induces USO}) that can be solved in linear time~\cite{SchurrS04}. In particular, the polynomials in our construction do not give rise to \np-hard optimization problems (unless $\p = \np$).

\subsubsection{Notation.}
We write all vectors in boldface and denote the~$i$-th unit vector by~$\vec{e^i}$.
For~$x,y\in\N\cup\{0\}$, we write~$x\equiv_2 y$ if both~$x$ and~$y$ are either even or odd.
For~$n\in\N$, we write~$[n]\coloneqq\{1,\ldots,n\}$.
We denote the~$i$-th row of a matrix~$A\in\R^{m\times n}$ by~$\vec{A_{i\cdot}}$ and, for an index set~$\mathcal{A}\subseteq[m]$, we write~$A_{\mathcal{A}\cdot}$ to denote the matrix that consists of the rows~$\vec{A_{i\cdot}}$ with~$i\in\mathcal{A}$.

\section{Simplex and active-set method}\label{sec: simplex and active set}
The aim of this section is to illustrate that the active-set method is a natural generalization of the simplex method.
This connection is well-known. 

\begin{proposition}\emph{(Exercise 8.17 in~\cite{fletcher2000practical})} 
The {active-set method}, when applied to linear objectives, is \emph{equivalent} to the simplex method, i.e., the same intermediate solutions are computed during the application of both algorithms.
\end{proposition}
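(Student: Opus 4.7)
The plan is to run both algorithms in parallel on a linear program $\min\vec{c}^\top\vec{x}$ subject to $A\vec{x}\le\vec{b}$, starting from the same initial vertex with the same set of active (equivalently, basic) constraints and the same pivot rule, and then to show by induction on the iteration count that the two iterate sequences coincide.

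At a vertex $\vec{x}$ with active index set $\mathcal{A}\subseteq[m]$ of cardinality $n$ and $A_{\mathcal{A}\cdot}$ invertible, the active-set method first solves the equality-constrained subproblem $\min\vec{c}^\top\vec{d}$ subject to $A_{\mathcal{A}\cdot}\vec{d}=\vec{0}$; since the objective is linear and the active face is zero-dimensional, this forces $\vec{d}=\vec{0}$. The method then inspects the multipliers $\vec{\lambda}$ defined by $A_{\mathcal{A}\cdot}^\top\vec{\lambda}=-\vec{c}$. I would verify directly that these multipliers coincide with the simplex reduced costs associated with the basis $\mathcal{A}$, so $\vec{\lambda}\ge\vec{0}$ is exactly the simplex optimality test, and any $j\in\mathcal{A}$ with $\lambda_j<0$ corresponds to a candidate entering variable.

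Once an index $j$ is dropped (according to the pivot rule), the active-set method picks a direction $\vec{d}$ spanning the one-dimensional kernel of $A_{(\mathcal{A}\setminus\{j\})\cdot}$, normalized so that $\vec{c}^\top\vec{d}<0$. A ratio test then yields the largest $\alpha>0$ with $A(\vec{x}+\alpha\vec{d})\le\vec{b}$, together with an index $k\notin\mathcal{A}$ that becomes tight. This ratio test is exactly the simplex minimum-ratio rule, the direction $\vec{d}$ traces the edge of the polytope along which the simplex method pivots, and the update $\mathcal{A}'=(\mathcal{A}\setminus\{j\})\cup\{k\}$ is precisely the basis exchange. Hence the successor $\vec{x}+\alpha\vec{d}$ agrees with the simplex successor, closing one step of the induction.

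The main obstacle is semantic rather than technical: one must ensure that the two algorithms interpret the phrase ``pivot rule'' identically under degeneracy, so that ties in the selection of $j$ (multiple negative multipliers) and of $k$ (multiple constraints becoming tight simultaneously) are resolved the same way. Once this alignment of tie-breaking is fixed, the equivalence is an unwrapping of the definitions of both methods.
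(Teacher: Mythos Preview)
Your induction argument is correct in spirit and would establish the equivalence. Note, however, that the paper does not give a formal proof of this proposition at all: it is cited as a textbook exercise and then ``proved'' by presenting the two algorithms in identical pseudocode (Algorithms~\ref{alg:Simplex} and~\ref{alg:AS}), where the extra conditionals of \textsc{ActiveSet} are grayed out and annotated as ``always true'' in the linear case. In other words, the paper's approach is to reformulate both methods so that one is visibly the specialization of the other, making the equivalence a matter of inspection rather than a step-by-step verification. Your approach is the conventional one---run both in parallel and match one iteration at a time---and arrives at the same conclusion with more explicit bookkeeping. One small discrepancy: you flag degeneracy as the main obstacle to aligning the pivot rules, but the paper's formulation of \textsc{Simplex} explicitly assumes a non-degenerate system (so $|\Eq(\vec x)|=n$ at every vertex and the leaving/entering indices are unique), which removes that difficulty entirely; also, the paper works with $\max$ rather than $\min$, so the sign conventions on $\vec{\lambda}$ and $\vec{c}^\top\vec d$ would need to be flipped accordingly.
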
 

Consider Algorithms~\ref{alg:Simplex} \& \ref{alg:AS} for a side-by-side comparison of \textsc{Simplex} and \textsc{ActiveSet}.
Both algorithms are formulated in a way that makes their connection evident.
We now describe the behavior of both algorithms in more detail.

\afterpage{%
\begin{landscape}
    \vspace*{\fill}
\begin{figure}[htb]
  \centering
  \begin{minipage}[t]{.47\linewidth}
    \begin{algorithm}[H]
    \begin{minipage}{\dimexpr\textwidth-10pt} 
        \setstretch{1.2} 
     
	    \label{alg:Simplex}\caption{\textsc{Simplex}($\vec c$, $A$, $\vec b$, $\vec{x}$)}
	    \textbf{input: }$\vec c\in\Q^n$, vertex $\vec{x}$ of polytope~$\mathcal P\coloneqq\{\vec y\colon A\vec y\leq \vec b\}$,\;
	    \Indp system~$A\vec y\leq\vec b$ non-degenerate\;
	    \Indm \textbf{output: }optimum solution~$\vec x$ of~$\max\{\vec c^{\top}\vec y\colon\vec y\in\mathcal{P}\}$\;
	    \vgap
	    \hrule  
	    \vgap
	    
	    $\mathcal{B}\gets\Eq(\vec x)$\tcp*{basis}
        \SetInd{4pt}{4pt} 
	    
	    \While{$\mathcal{D}\coloneqq\{\vec{d}\colon{A_{\Eq(\vec x)\cdot}}\vec d\leq \vec 0$,~$\vec{c}^{\top}\vec{d}> 0\}\neq\emptyset$}
	    {          
	     \vspace{2pt} 
		 take $\vec{d}\in\mathcal{D}$ with~$|\{i\in\mathcal{B}\colon \vec{A_{i\cdot}}\vec{ d}=0\}|=n-1$\; 		 
		 
		 \textcolor{grayif}{\If(\tcp*[f]{always true}){$A_{\mathcal{B}\cdot}\vec{d}\neq\vec 0$}{
		 \textcolor{black}{$\mathcal{B}\gets\mathcal{B}\setminus\{\ell\}$ for the unique~$\ell\in\mathcal{B}$ with~$\vec{A_{\ell\cdot}}\vec d<0$}\;
		}}
		
	    \textcolor{grayif}{\If(\tcp*[f]{always true}){$A_{\mathcal{B}\cdot}\vec{d}=\vec 0$}{
	     \textcolor{black}{${\mu}\gets\inf\{\mu\geq 0\colon \vec{x}+\mu\vec{d}\notin\mathcal{P}\}$}\;
	     \textcolor{black}{$\vec{x}\gets\vec{x}+\mu\vec{d}$}\;
	     \textcolor{grayif}{\If(\tcp*[f]{always true}){$\vec{c}^{\top}\vec d>0$}{
	    \textcolor{black}{$\mathcal{B} \gets \mathcal{B}\cup\{j\}$ for the unique~$j\in\Eq(\vec x)\setminus\mathcal{B}$}\;
	    }}}}
	    }
    \end{minipage}
	\end{algorithm}
  \end{minipage}%
  \hfill
  \begin{minipage}[t]{.47\linewidth}
    \begin{algorithm}[H]
    \begin{minipage}{\dimexpr\textwidth-10pt}
        \setstretch{1.2} 
     
	    \label{alg:AS}\caption{\textsc{ActiveSet}($f$, $A$, $\vec b$, $\vec{x}$)}
	    \textbf{input: }$f\colon\R^n\to\R$ continuously differentiable,\;
	    \Indp point~$\vec{x}$ of polytope~$\mathcal P\coloneqq\{\vec y\colon A\vec y\leq \vec b\}$\;
	    \Indm \textbf{output: }critical point~$\vec x$ of~$\max\{f(\vec y)\colon\vec y\in\mathcal{P}\}$\;
	    \vgap
	    \hrule  
	    \vgap
	    
	    $\mathcal{A}\gets\Eq(\vec x)$\tcp*{active set} 
        \SetInd{4pt}{4pt} 
	    
	    \While{$\mathcal{D}\coloneqq\{\vec{d}\colon{A_{\Eq(\vec x)\cdot}}\vec d\leq \vec 0,~\vec{\nabla\!f(\vec{x})}\!^{\top}\vec{d}> 0\}\neq\emptyset$}
	    {
	    \vspace{2pt}
		 take $\vec{ d}\in\mathcal{D}$ maximizing~$|\{i\in\mathcal{A}\colon \vec{A_{i\cdot}}\vec{ d}=0\}|$\; 	     
	     
	     \If{$A_{\mathcal{A}\cdot}\vec{d}\neq\vec 0$}{
	     $\mathcal{A}\gets\mathcal{A}\setminus\{i\}$ for some~$i\in\mathcal{A}$ with~$\vec{A_{i\cdot}}\vec{d}<0$\;	
		 
	     }{}
	     \If{$A_{\mathcal{A}\cdot}\vec{d}=\vec 0$}{	     
	     ${\mu}\gets\inf\{\mu\geq 0\colon \vec{x}+\mu\vec{d}\notin\mathcal{P}\text{ or } \vec{\nabla\!f(x}+\mu \vec{d)}\!^{\top}\vec d\leq 0\}$\;
	     
	     $\vec{x}\gets\vec{x}+\mu\vec{d}$\;
	     
	     \If{$\vec{\nabla\!f(\vec{x})}\!^{\top}\vec{d}> 0$}
	     {
	     $\mathcal{A} \gets \mathcal{A}\cup\{j\}$ for some~$j\in\Eq(\vec{x})\setminus\mathcal{A}$\;
	     }    	 	
	     }    
	    }
    \end{minipage}
	\end{algorithm}
  \end{minipage}
\end{figure}
    \vspace*{\fill}
\end{landscape}
}

~

\noindent\textbf{The simplex method} solves arbitrary linear programs of the form
\begin{equation}\tag{LP}\label{LP}
\begin{aligned}
    \max \quad & \vec{c}^{\top}\vec{x} \\
    \text{s.t.} \quad & A\vec{x} \leq \vec{b},
\end{aligned}
\end{equation}
where~$\vec c\in\Q^n$,~$A\in\Q^{m\times n}$, and~$\vec b\in\Q^m$.
For simplicity, we assume that the feasible region~$\mathcal{P}\coloneqq\{\vec x\colon A\vec x\leq \vec b\}$ is bounded, and that the constraints are non-degenerate, i.e., we assume that every vertex~$\vec x$ of~$\mathcal{P}$ satisfies~$|\Eq(\vec x)|=n$, where we write~$\Eq(\vec x)\coloneqq\{i\colon \vec{A_{i\cdot}}\vec x=b_i\}$.
The simplex method can be understood combinatorially as a traversal of~$\mathcal{P}$ along its edges.

More precisely, it maintains a subset~$\mathcal{B}$, called a \emph{basis}, of~$n$ constraint indices such that the system~$A_{\mathcal{B}\cdot}\vec x=\vec{b}_{\mathcal{B}}$ uniquely determines the current vertex~$\vec x$. 
By non-degeneracy, this is equivalent to~$\mathcal{B}=\Eq(\vec x)$.
As long as~$\vec x$ is not already an optimum, the simplex method moves along edges of~$\mathcal{P}$ to neighboring vertices with better objective function values.
We now describe how our formulation of \textsc{Simplex} reflects this behavior.

A vertex~$\vec x$ of~$\mathcal{P}$ is not optimal if and only if there is a direction~$\vec{d}$ that is feasible and improving, i.e., that satisfies~$A_{\Eq(\vec x)\cdot}\vec d\leq \vec 0$ and~$\vec{c}^{\top}\vec d>0$.
In particular, there is at least one improving edge direction\footnote{An edge direction in vertex~$\vec x$ is a vector of the form~$\vec d\coloneqq \lambda(\vec y-\vec x)$ where~$\vec y$ is a neighboring vertex of~$\vec x$ and~$\lambda>0$.} in each vertex~$\vec x$ that is not optimal.
Due to non-degeneracy, every edge direction~$\vec d$ corresponds to a unique index~$\ell\in\mathcal{B}$ with~$\vec{A_{\ell\cdot}}\vec d<0$ and~${A_{\mathcal{B}\setminus\{\ell\}\cdot}}\vec d=\vec 0$.
In fact, a feasible direction~$\vec d$ is an edge direction in~$\vec x$ if and only if
\[
|\{i\in\mathcal{B}\colon \vec{A_{i\cdot}}\vec{ d}=0\}|=n-1=|\mathcal{B}|-1.
\]
Now, if the current vertex~$\vec x$ is not optimal, \textsc{Simplex} chooses an improving edge direction~$\vec{d}$, deletes the corresponding index~$\ell$ from~$\mathcal{B}$, and moves along~$\vec d$ as far as possible without becoming infeasible.
Note that, by boundedness of~$\mathcal{P}$, this transitions to another vertex~$\vec{y}$ of~$\mathcal{P}$.
The method then adds a new active constraint index~$j\in\Eq(\vec{y})$ to~$\mathcal{B}$ to obtain a basis uniquely identifying the new vertex.
By non-degeneracy, this index is uniquely determined by~$j\in\Eq(\vec{y})\setminus\mathcal{B}$.

The number of iterations of the simplex method depends on the \emph{pivot rule} that determines the improving edge direction~$\vec d$ chosen in each step in case there is more than one.
The question of whether or not there exists a pivot rule guaranteeing a polynomial running time of the simplex method is arguably one of the most famous open problems in (linear) optimization.
We refer to~\cite{dantzig1963linear} for more details on the simplex method.

~

\noindent\textbf{The active-set method} can be applied to non-linear programs with linear constraints of the form
\begin{equation}\tag{NLP}\label{NLP}
\begin{aligned}
    \max \quad & f(\vec x) \\
    \text{s.t.} \quad & A\vec{x} \leq \vec{b},
\end{aligned}
\end{equation}
where~$f\colon\R^n\to\R$ is continously differentiable,~$A\in\Q^{m\times n}$, and~$\vec b\in\Q^m$.
As before, we denote the feasible region by~$\mathcal{P}$, and the set of indices of the constraints that are active in some~$\vec{x}\in\mathcal{P}$ by~$\Eq(\vec x)$.
For simplicity, we again assume that~$\mathcal{P}$ is bounded.

If~$\vec{x^*}$ is a local optimum of~\eqref{NLP}, then there are no feasible improving directions in~$\vec{x^*}$.
That is, for all~$\vec{d}$ with~${A_{\Eq(\vec{x^*})\cdot}}\vec d\leq \vec 0$, we have~$\gradftop{x^*}\vec d\leq 0$.
The active-set method aims to find a \emph{critical point}, i.e., a point satisfying this necessary optimality condition, by determining the set of constraints that are active in such a point.

More precisely, the active-set method maintains a subset~$\mathcal{A}\subseteq\Eq(\vec x)$, called \emph{active set}, of constraints that are active in the current solution~$\vec x\in\mathcal P$.
As long as~$\vec x$ is not already a critical point, the method tries to move along an improving feasible direction~$\vec d$ that respects the active set, i.e., that satisfies~$A_{\mathcal{A}\cdot}\vec d=\vec 0$. 
If there is no such direction, the method removes as few indices as possible from~$\mathcal{A}$ until such a direction~$\vec d$ can be found.
Now it moves as far as possible along~$\vec d$ without becoming infeasible, or until reaching a point~$\vec y$ with~$\vec{\nabla\!f(\vec{y})}\!^{\top}\vec{d}\leq 0$.
If this movement is stopped by hitting the boundary of~$\mathcal{P}$, the method adds a new active constraint to~$\mathcal{A}$.
This procedure is iterated until a critical point for~\eqref{NLP} is found.
We now describe how our formulation of \textsc{ActiveSet} allows this behavior.

A feasible point~$\vec x$ is \emph{not critical} if and only if there is a feasible improving direction~$\vec d$, i.e., a direction~$\vec d$ with~$A_{\Eq(\vec x)\cdot}\vec d\leq \vec 0$ and~$\vec{\nabla\!f(\vec{x})}\!^{\top}\vec{d}> 0$.
In \textsc{ActiveSet}, we take such a direction~$\vec d$ which also maximizes the number of indices~$j\in\mathcal{A}$ with~$\vec{A_{j\cdot}}\vec d=0$, and can then remove indices~$i$ from~$\mathcal{A}$ with~$\vec{A_{i\cdot}}\vec d<0$ (in separate iterations).
This corresponds to deleting a smallest set~$\mathcal{I}\subseteq\mathcal{A}$ from~$\mathcal{A}$ that ensures the existence of a feasible improving direction~$\vec{d}$ with~${A_{\mathcal{A}\setminus\mathcal{I}\cdot}}\vec d=\vec 0$.
Now the method moves along the improving direction~$\vec d$ as far as possible without becoming infeasible or reaching a point where the derivative of the objective in direction~$\vec d$ is non-positive.
If the directional derivative is still positive in the new point, the movement was stopped by hitting the boundary of~$\mathcal{P}$, so the method adds a new active constraint to~$\mathcal{A}$.

Implementations of the active-set method are usually formulated for strictly concave, quadratic objective functions (see e.g.~\cite{fletcher2000practical,nocedal1999numerical}), where they compute Karush-Kuhn-Tucker (KKT) points~\cite{karush2013minima,kuhn1951nonlinear} of equality constrained subproblems until reaching a KKT point, that is, the unique global optimum solution, of~\eqref{NLP}.
More precisely, given a feasible solution~$\vec x$ of~\eqref{NLP} and an active set~$\mathcal{A}\subseteq\{i\colon\vec{A_{i\cdot}}\vec{x}=b_i\}$ of constraint indices, the method computes a KKT point~$\vec{x^*}$ of
\[
\begin{array}{rrclcl}\tag{$\text{NLP}_{\!\mathcal{A}}$}\label{NLPsub}
            \displaystyle \max & \multicolumn{3}{l}{f(\vec y)}\\
            \textrm{s.t.} & A_{\mathcal{A}\cdot}\vec y  & = & \vec b_{\mathcal{A}}. & &
        \end{array}
\]
Then, it moves from~$\vec{x}$ towards~$\vec{x^*}$ as far as possible without becoming infeasible.
If it reaches~$\vec{x^*}$, the method removes a constraint from the active set to enable further progress.
Otherwise, it adds a new constraint to the active set. 

Note that our formulation of \textsc{ActiveSet} allows this behavior for strictly concave, quadratic functions.
If the current solution~$\vec{x}$ is not already the KKT point of~\eqref{NLPsub}, the direction~$\vec d\coloneqq \vec{x^*}-\vec x$ is improving (by concavity of the objective) and satisfies~$A_{\mathcal{A}\cdot}\vec d=\vec 0$, i.e., maximizes~$|\{i\in\mathcal{A}\colon \vec{A_{i\cdot}}\vec{ d}=0\}|$.
Further, we have~$\vec{x^*}=\vec x+\bar\mu\vec d$ for
\[
\bar\mu\coloneqq\inf\{\mu\geq 0\colon\vec{\nabla\!f(x}+\mu \vec{d)}\!^{\top}\vec d\leq 0\}.
\]
If~$\vec{x}$ is the KKT point of~\eqref{NLPsub}, then there is no feasible improving direction~$\vec d$ with~$A_{\mathcal{A}\cdot}\vec d=\vec 0$, so \textsc{ActiveSet} removes constraints from~$\mathcal{A}$ preventing progress.

While concave quadratic functions allow for an efficient implementation, the computations in an iteration of \textsc{ActiveSet} are non-trivial for general functions.
In the following, we assume that \textsc{ActiveSet} has access to an oracle, which finds~$\vec d$ and computes~$\mu$ in each step.

The running time of the algorithm, i.e., the number of iterations, highly depends on the \emph{pivot rule} that determines the direction~$\vec{d}$ and the indices~$i\in\mathcal{A}$ and~$j\in\Eq(\vec{x})\setminus\mathcal{A}$ in each iteration in case there is more than one choice.
It is an open problem whether there is a polynomial time pivot rule for the active-set method for (strictly) concave, quadratic objective functions.

\section{Unconditional super-polynomial bounds for active-set}\label{sec: exponential bound}
In Section~\ref{sec: polynomial}, we define multivariate polynomials~$\fct n\colon\R^n\to\R$ of degree~$n$ for all~$n>2$.
We then see in Section~\ref{sec: move to vertex} that there is exactly one improving edge in every vertex of the boolean hypercube~$[0,1]^n$ which is not optimal; and that the value of~$\fct n$ increases when moving along this edge to the neighboring vertex.
The path~$\pi$ obtained by starting in~$\vec 0$ and iteratively moving along unique improving edges to neighboring vertices visits all~$2^n$ vertices of the hypercube, see Section~\ref{sec: hamiltonian}. 
In Section~\ref{sec: main results}, we obtain our main results by showing that \textsc{ActiveSet} follows the path~$\pi$.
Finally, we make some remarks on the complexity of the optimization problem induced by our polynomials in Section~\ref{sec: complexity}.

\subsection{Polynomials of linear degree}\label{sec: polynomial}
For each~$n\in\N$, we define a multivariate polynomial~$\fct n\colon\R^n\to\R$ by
\[
    \fct n(\vec x)=\sum\limits_{i=1}^{n}\left(2^{i-1}\alpha_{n,i}(\vec x)-\beta_{n,i}(\vec x)\right),
\]
with~$\vec x=(x_1,\ldots,x_n)\!^{\top}\!\!\in\R^n$ and $\alpha_{n,i},\beta_{n,i}\colon\R^n\to\R$ given by~$\alpha_{n,n+1}(\vec x)=0$ and
\begin{align}
\alpha_{n,i}(\vec x)&=x_i+(1-2x_i)\alpha_{n,i+1}(\vec x), &\forall\, i\in[n], \label{def alpha}\\
\beta_{n,i}(\vec x)&=2^i\left(x_i-x_i^2\right)\left(1-x_{i-1}+\sum\nolimits_{j=1}^{i-2}x_j\right), &\forall\, i\in[n],\notag
\end{align}
where we write~$x_0\coloneqq 1$, i.e., we have~$\beta_{n,1}(\vec x)=0$ for all~$\vec x$.
Whenever~$\vec x$ and its dimension are clear from the context, we write~$\alpha_i\coloneqq\alpha_{n,i}(\vec x)$ and~$\beta_i\coloneqq\beta_{n,i}(\vec x)$.
Note that~$\fct n$ is a multivariate polynomial of degree~$n$ -- except for~$\fct 2$ which is of degree~$3$.

We will analyze the behavior of \textsc{ActiveSet} optimizing~$\fct n$ on the hypercube~$[0,1]^n$ starting in~$\vec 0$.
As a preparation, we now compute the maximum of~$\fct n$ on~$[0,1]^n$ as well as the gradients of~$\fct n$ in the vertices of the hypercube.

\begin{restatable}{lemma}{lemmamax}\label{lemma: maximum}
For all~$n\in\N$, the optimum solution of
\begin{align*}
    \max \quad & \fct n(\vec x) \\
    \text{s.t.} \quad & \vec x \in [0,1]^n
\end{align*}
is attained by the unit vector~$\vec{e^n}$.
\end{restatable}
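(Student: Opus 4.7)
The plan is to sandwich $\fct n$ uniformly on $[0,1]^n$ between an upper bound of $2^n-1$ and the value $\fct n(\vec{e^n})$. Since
\[
\fct n(\vec x)=\sum_{i=1}^n\bigl(2^{i-1}\alpha_{n,i}(\vec x)-\beta_{n,i}(\vec x)\bigr),
\]
it suffices to establish two ingredients: (i) $\alpha_{n,i}(\vec x)\in[0,1]$ for every $\vec x\in[0,1]^n$ and every $i\in[n]$, and (ii) $\beta_{n,i}(\vec x)\ge 0$ on $[0,1]^n$.

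For ingredient~(i), I would proceed by downward induction on $i$, starting from the base case $\alpha_{n,n+1}\equiv 0$. For the inductive step, I rewrite the recursion \eqref{def alpha} as
\[
\alpha_{n,i}(\vec x)=(1-x_i)\,\alpha_{n,i+1}(\vec x)+x_i\bigl(1-\alpha_{n,i+1}(\vec x)\bigr),
\]
which exhibits $\alpha_{n,i}(\vec x)$ as a convex combination (with coefficients $1-x_i,x_i\in[0,1]$) of $\alpha_{n,i+1}(\vec x)$ and $1-\alpha_{n,i+1}(\vec x)$, both in $[0,1]$ by the inductive hypothesis. For ingredient~(ii), note that $\beta_{n,i}(\vec x)=2^i x_i(1-x_i)\bigl(1-x_{i-1}+\sum_{j=1}^{i-2}x_j\bigr)$; on $[0,1]^n$ we have $x_i(1-x_i)\ge 0$, and $1-x_{i-1}+\sum_{j=1}^{i-2}x_j\ge 1-x_{i-1}\ge 0$, using $x_{i-1}\le 1$ (and the convention $x_0=1$, which makes $\beta_{n,1}\equiv 0$ trivially non-negative).

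Combining (i) and (ii) yields $\fct n(\vec x)\le \sum_{i=1}^n 2^{i-1}=2^n-1$ for every $\vec x\in[0,1]^n$. To finish, I would evaluate $\fct n$ at $\vec{e^n}$. Because $\vec{e^n}\in\{0,1\}^n$, every factor $x_i(1-x_i)$ vanishes, so $\beta_{n,i}(\vec{e^n})=0$ for all~$i$. Unrolling \eqref{def alpha} at $\vec x=\vec{e^n}$ from $i=n$ downward, using $x_n=1$ and $x_i=0$ for $i<n$, gives $\alpha_{n,n}(\vec{e^n})=1$ and then $\alpha_{n,i}(\vec{e^n})=\alpha_{n,i+1}(\vec{e^n})=1$ for all $i<n$. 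Hence $\fct n(\vec{e^n})=\sum_{i=1}^n 2^{i-1}=2^n-1$, meeting the upper bound.

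The argument is almost entirely mechanical; the only real step of substance is the convex-combination rewriting of the $\alpha_{n,i}$ recursion, which makes the induction in~(i) immediate. The non-negativity of $\beta_{n,i}$ in~(ii) and the value $\fct n(\vec{e^n})=2^n-1$ are then direct calculations.
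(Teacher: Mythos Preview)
Your proof is correct and follows essentially the same approach as the paper's: both bound $\alpha_i\in[0,1]$ by downward induction on~$i$, observe $\beta_i\ge 0$ on $[0,1]^n$, and check that $\vec{e^n}$ attains the resulting upper bound $\sum_{i=1}^n 2^{i-1}=2^n-1$. The paper additionally remarks that $\alpha_i(\vec x)=1$ for all~$i$ holds \emph{only if} $\vec x=\vec{e^n}$, which yields uniqueness of the maximizer, but this is not needed for the statement as phrased.
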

\begin{proof}
	Observe that~$\alpha_i\in[0,1]$ for all~$\vec x\in[0,1]^n$ and all~$i\in[n+1]$ (e.g., by induction on~$i$).
	Further, by equation~\eqref{def alpha}, it is easy to see that we have~$\alpha_i(\vec x)=1$ for all~$i\in[n]$ if and only if~$\vec x=\vec{e^n}$.	
	Since, for all~$i\in[n]$, we have~$\beta_i(\vec x)\geq 0$ for all~$\vec x\in[0,1]^n$ and~$\beta_i(\vec y)=0$ for all~$\vec y\in\{0,1\}^n$, this proves the statement.
\qed
\end{proof}

\begin{restatable}{lemma}{lemmapartial}\label{lemma: partial F}
For all~$n\in\N$,~$k\in[n]$, and~$\vec x\in\{0,1\}^n$, the~$k$-th partial derivative of~$\fct n$ is given by
\begin{align*}
	 \partial_k\fct n(\vec x) = (1-2\alpha_{k+1})\!\sum_{i=1}^k 2^{i-1}\!\prod_{j=i}^{k-1}(1-2x_j)-2^k(1-2x_k)\!\left(1-x_{k-1}+\sum\limits_{i=1}^{k-2}x_i\!\right)\!, 
\end{align*}
where we write~$x_0\coloneqq 1$, i.e., we have~$\partial_1\fct n(\vec x)=1-2\alpha_2$.
\end{restatable}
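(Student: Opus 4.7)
The plan is to compute the partial derivative termwise, exploiting the structure of the definition of~$\fct n$ as a sum of two contributions, one from the~$\alpha_{n,i}$ and one from the~$\beta_{n,i}$, and to use that~$\vec x\in\{0,1\}^n$ collapses many terms.

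First I would handle the $\beta$-part. Each summand is $\beta_{n,i}(\vec x)=2^i x_i(1-x_i)\cdot g_i(\vec x)$ where $g_i$ depends on $x_1,\dots,x_{i-2},x_{i-1}$ only. By the product rule,
\[
\partial_k\beta_{n,i}=2^i(1-2x_i)g_i(\vec x)\cdot\mathbbm 1_{k=i}+2^i x_i(1-x_i)\,\partial_k g_i(\vec x).
\]
At any boolean point the factor $x_i(1-x_i)$ vanishes, so the second term drops out entirely, and only the $i=k$ term survives in the sum over $i$. This already yields the $\beta$-contribution $-2^k(1-2x_k)\bigl(1-x_{k-1}+\sum_{j=1}^{k-2} x_j\bigr)$.

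The main work is the $\alpha$-part, which I would address by induction. The point is that $\alpha_{n,i}$ depends only on $x_i,x_{i+1},\dots,x_n$, so $\partial_k\alpha_{n,i}=0$ whenever $i>k$. Applying $\partial_k$ to the recursion~\eqref{def alpha} gives
\[
\partial_k\alpha_{n,i}=\mathbbm 1_{i=k}(1-2\alpha_{n,i+1})-2x_i\mathbbm 1_{i=k}\alpha_{n,i+1}\cdot 0+(1-2x_i)\,\partial_k\alpha_{n,i+1},
\]
which for $i=k$ reduces to $\partial_k\alpha_{n,k}=1-2\alpha_{k+1}$ (using $\partial_k\alpha_{n,k+1}=0$), and for $i<k$ is simply $(1-2x_i)\partial_k\alpha_{n,i+1}$. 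A downward induction on $i$ from $k$ to $1$ then gives
\[
\partial_k\alpha_{n,i}=(1-2\alpha_{k+1})\prod_{j=i}^{k-1}(1-2x_j)\quad\text{for } i\le k.
\]
Multiplying by $2^{i-1}$, summing $i=1,\dots,n$, and noting that contributions with $i>k$ vanish, produces the first summand of the claimed formula; adding the $\beta$-contribution computed above and checking the boundary case $k=1$ (where both the empty product and the empty sum for $\beta$ behave as stated, using the convention $x_0=1$) completes the derivation.

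The main obstacle is really just the bookkeeping for the $\alpha$-part: tracking how the telescoping product $\prod_{j=i}^{k-1}(1-2x_j)$ emerges from repeatedly pulling out factors $(1-2x_i)$ and making sure the index ranges and boundary cases (in particular $i=k$ giving the empty product~$1$, and $k=1$ giving the empty sum~$0$ in the $\beta$-term) line up correctly. Once those edge cases are checked, the identity follows without any cancellation that exploits finer structure of $\vec x$ beyond $x_i\in\{0,1\}$.
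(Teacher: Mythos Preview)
Your proposal is correct and follows essentially the same approach as the paper: split into the $\alpha$- and $\beta$-contributions, observe that $\partial_k\alpha_i=0$ for $i>k$ and establish $\partial_k\alpha_i=(1-2\alpha_{k+1})\prod_{j=i}^{k-1}(1-2x_j)$ for $i\le k$ by downward induction from $i=k$, and for the $\beta$-part use that the factor $x_i(1-x_i)$ kills all terms except $i=k$ at boolean points. The only blemish is the garbled middle term ``$-2x_i\mathbbm 1_{i=k}\alpha_{n,i+1}\cdot 0$'' in your displayed differentiation of the recursion---it should simply read $\mathbbm 1_{i=k}(1-2\alpha_{n,i+1})+(1-2x_i)\partial_k\alpha_{n,i+1}$---but your subsequent case analysis recovers the right expressions, so this is cosmetic rather than a gap.
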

\begin{proof}
For all~$i \in [n]$, the value of~$\alpha_i$ only depends on~$x_k$ if~$i \leq k$.
Thus,~$\partial_k\alpha_i=0$ for all~$i>k$.
Further, by equation~\eqref{def alpha}, one can easily verify that
\[
	\partial_k\alpha_i = (1-2\alpha_{k+1})\prod_{j=i}^{k-1}(1-2x_j)
\]
holds for all~$i \leq k$, e.g., by induction on~$i$ starting with~$i=k$.

For all~$i \in [n]$, the function~$\beta_i$ only depends on~$x_k$ if~$i \geq k$.
Thus,~$\partial_k\beta_i=0$ for all~$i<k$.
For~$i>k$, we have~$\partial_k\beta_i=0$ since~$(x_i-x_i^2)=0$ for all~$\vec{x}\in\{0,1\}^n$ and all~$i\in[n]$.
We obtain the statement by observing that%
\[
    \partial_k\beta_k=2^k(1-2x_k)\left(1-x_{k-1}+\sum_{i=1}^{k-2}x_i\right). \tag*{\raisebox{-4ex}\qed}
\]
\end{proof}

\subsection{Uniqueness of improving edges}\label{sec: move to vertex}
We now show that in every vertex of the hypercube that is not optimal, there is exactly one improving edge direction.

\begin{proposition}\label{proposition: exactly one}
For all~$n\in\N$ and all~$\vec x\in\{0,1\}^n\setminus\{\vec{e^n}\}$, there is exactly one~$k\in[n]$ such that~$\partial_k \fct n(\vec x)>0$ and~$x_k=0$, or~$\partial_k \fct n(\vec x)<0$ and~$x_k=1$.
\end{proposition}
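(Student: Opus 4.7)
The plan is to simplify $\partial_k \fct n(\vec x)$ at vertices so that the improving condition on coordinate $k$ becomes a single inequality. For $\vec x\in\{0,1\}^n$ a short induction shows $\alpha_i = \sigma_i$ with $\sigma_i := x_i \oplus x_{i+1}\oplus\dots\oplus x_n \in\{0,1\}$, so $1-2\alpha_{k+1}=(-1)^{\sigma_{k+1}}$ and each factor $\prod_{j=i}^{k-1}(1-2x_j)=(-1)^{\sigma_i\oplus \sigma_k}$. Using the identity $\sigma_k\oplus\sigma_{k+1}=x_k$, the formula of Lemma~\ref{lemma: partial F} factors as
\[
    \partial_k \fct n(\vec x) \;=\; (1-2x_k)\bigl(V_k - 2^k c_k\bigr),
\]
where $V_k := \sum_{i=1}^k 2^{i-1}(-1)^{\sigma_i}$ and $c_k := 1-x_{k-1}+\sum_{i=1}^{k-2}x_i$. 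The improving condition $(\partial_k \fct n>0 \wedge x_k=0)$ or $(\partial_k \fct n<0 \wedge x_k=1)$ is equivalent to $(1-2x_k)\partial_k \fct n(\vec x)>0$, which via $(1-2x_k)^2=1$ reads exactly $U_k := V_k - 2^k c_k > 0$. The whole question thus reduces to counting indices $k\in[n]$ with $U_k>0$.

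The next step will trim the list of candidates drastically. Since $|V_k|\le 2^k-1<2^k$, any $k$ with $c_k\ge 1$ satisfies $U_k<0$. Inspecting $c_k$ shows $c_k=0$ only when $k=1$ (forced by the convention $x_0=1$) or $(x_1,\dots,x_{k-1})=\vec{e^{k-1}}$, i.e.\ $k=j+1$ where $j\in[n]$ is the smallest index with $x_j=1$. So at most two indices can be improving: $k=1$, and, provided $\vec x\ne\vec 0$, $k=j+1$.

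The key evaluation is then short. Immediately $U_1=V_1=(-1)^{\sigma_1}$. For $k=j+1$, the definition of $j$ yields $\sigma_1=\sigma_2=\cdots=\sigma_j=1\oplus\sigma_{j+1}$, and the partial sum telescopes to
\[
    V_{j+1} \;=\; -(-1)^{\sigma_{j+1}}(2^j-1)+2^j(-1)^{\sigma_{j+1}} \;=\; (-1)^{\sigma_{j+1}},
\]
so $U_{j+1}=(-1)^{\sigma_{j+1}}$. Both candidates therefore have $|U_k|=1$, and the XOR relation $\sigma_1=1\oplus\sigma_{j+1}$ forces exactly one of $\sigma_1,\sigma_{j+1}$ to vanish, hence exactly one of $U_1,U_{j+1}$ to equal $+1$.

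Finally, the boundary cases fall out. If $\vec x=\vec 0$, the only candidate is $k=1$ and $U_1=1$, so $k=1$ is uniquely improving. If $\vec x=\vec{e^n}$ (excluded from the statement), then $j=n$, the second candidate $k=n+1$ leaves $[n]$, and the remaining candidate has $U_1=-1<0$, consistent with the optimality of $\vec{e^n}$ from Lemma~\ref{lemma: maximum}. For every other $\vec x$ we have $j\le n-1$, so $k=j+1\in[n]$ is valid, and the XOR dichotomy yields the unique improving $k$. The main technical obstacle is the careful sign-and-parity bookkeeping in the initial factorization; once $\partial_k \fct n(\vec x)=(1-2x_k)(V_k-2^kc_k)$ is in hand, the remainder is an elementary size-plus-parity argument.
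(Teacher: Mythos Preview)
Your proof is correct and rests on the same two structural facts the paper uses---the parity interpretation $\alpha_i\equiv_2\sum_{j\ge i}x_j$ and the observation that the $\beta$-term in $\partial_k\fct n$ can only vanish when $(x_1,\dots,x_{k-1})$ is empty or equals $\vec{e^{k-1}}$---but you organize them differently. The paper first extracts the combinatorial equivalence ``$k$ is improving $\iff$ $\pp(k)=1$ and $\s(k)\equiv_2 x_k$'' as a standalone lemma (Lemma~\ref{lemma: k-improvable iff}) and then argues ``at most one'' and ``at least one'' by separate case analyses; you instead derive the single factorization $\partial_k\fct n(\vec x)=(1-2x_k)(V_k-2^kc_k)$ and let the bound $|V_k|<2^k$ filter the candidates in one stroke. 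Your route is more compact and delivers existence and uniqueness simultaneously via the XOR relation $\sigma_1=1\oplus\sigma_{j+1}$, while the paper's route has the advantage of making the criterion $(\pp(k)=1,\ \s(k)\equiv_2 x_k)$ explicit, which it then reuses verbatim in Proposition~\ref{proposition: move to next vertex} and in the Hamiltonian-path argument of Proposition~\ref{proposition: Hamiltonian Path}.
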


Before giving the proof, we introduce notation tightening our further analysis.

\begin{definition}
    Given some~$\vec x=(x_1, \ldots, x_n)^{\top}\in\R^n$ and~$k\in[n]$, we write
    \[
    \pp\nolimits_n (\vec x,k)=x_{k-1}\prod\limits_{j=1}^{k-2}(1-x_j) \quad \text{and} \quad \s\nolimits_n(\vec x,k)=\sum\limits_{j=k+1}^{n} x_j,
    \]
    where we set~$x_0\coloneqq 1$, i.e., we have~$\pp_n(\vec x,1)=1$ for all~$\vec x$.
    Whenever~$n$ and~$\vec x$ are clear from the context, we write~$\pp(k)\coloneqq \pp\nolimits_n (\vec x,k)$ and~$\s(k)\coloneqq \s\nolimits_n(\vec x,k)$.
\end{definition}

We now observe how the dimension~$k$ of Proposition~\ref{proposition: exactly one} is determined by the values of~$\pp_n$ and~$\s_n$.

\begin{restatable}{lemma}{lemmaimprovableiff}\label{lemma: k-improvable iff}
The following are equivalent for all~$\vec{x}\in\{0,1\}^n$ and~$k\in[n]$:
\begin{enumerate}[itemsep=8pt]
\item $\partial_k \fct n(\vec x)>0$ and~$x_k=0$, or~$\partial_k \fct n(\vec x)<0$ and~$x_k=1$,
\item $\s(k)\equiv_2 x_k$ and~$\pp(k)=1$.
\end{enumerate}
\end{restatable}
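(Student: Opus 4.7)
My plan is to derive the equivalence from the closed-form expression for~$\partial_k\fct n(\vec x)$ provided by Lemma~\ref{lemma: partial F}, via a careful sign analysis of its two summands at a $0/1$ point. First I collect three preparatory facts. Iterating the recursion~\eqref{def alpha} downwards from $\alpha_{n+1}=0$, an easy induction shows that at a $0/1$ point~$\alpha_{k+1}\in\{0,1\}$ with~$\alpha_{k+1}\equiv_2 \s(k)$, so that $1-2\alpha_{k+1}=+1$ if~$\s(k)$ is even and~$-1$ otherwise. Next, I define $T_k := \sum_{i=1}^k 2^{i-1}\prod_{j=i}^{k-1}(1-2x_j)$, observe the recursion $T_k = 2^{k-1} + (1-2x_{k-1})\,T_{k-1}$ with base case $T_1 = 1$, and prove by induction that $1\le T_k \le 2^k - 1$; in particular~$T_k > 0$ and~$|T_k| < 2^k$. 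Finally, the bracketed factor $R := 1 - x_{k-1} + \sum_{i=1}^{k-2} x_i$ is a non-negative integer on $\{0,1\}^n$ that vanishes exactly when $x_{k-1}=1$ and $x_1=\dots=x_{k-2}=0$, which is precisely the condition $\pp(k) = 1$.

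With these in hand the proof splits into two cases. If $\pp(k) = 0$, then $R \ge 1$, so the second summand $2^k(1-2x_k)R$ has absolute value at least~$2^k$, strictly exceeding $|(1-2\alpha_{k+1})T_k| = |T_k| < 2^k$. Hence $\sign(\partial_k\fct n(\vec x)) = -\sign(1-2x_k)$, which means $x_k = 0$ forces $\partial_k\fct n(\vec x) < 0$ and $x_k = 1$ forces $\partial_k\fct n(\vec x) > 0$; condition~(i) therefore fails, and so does~(ii) since $\pp(k) \neq 1$. If instead $\pp(k) = 1$, then $R = 0$ and $\partial_k\fct n(\vec x) = (1-2\alpha_{k+1})\,T_k$, which by $T_k > 0$ has the same sign as $1 - 2\alpha_{k+1}$. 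Condition~(i) then boils down to: either $x_k = 0$ and $\s(k)$ is even, or $x_k = 1$ and $\s(k)$ is odd, which is exactly $\s(k) \equiv_2 x_k$ from~(ii).

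I expect the only delicate step to be the strict inequality $|T_k| < 2^k$, which is what lets a single unit of $R$ dominate the geometric-style sum and thus force the sign of~$\partial_k\fct n(\vec x)$ in the first case. Once this is in place, the remainder of the argument is a bookkeeping match between the parity of $\s(k)$, the value of $x_k$, and the sign of $1-2\alpha_{k+1}$, all of which follow directly from the definitions of $\alpha_i$, $\pp$ and $\s$.
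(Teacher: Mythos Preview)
Your proof is correct and follows essentially the same approach as the paper: both rely on Lemma~\ref{lemma: partial F}, establish the parity link $\alpha_{k+1}\equiv_2\s(k)$, and use the geometric bound $|T_k|<2^k$ so that a single unit of $R$ dominates. The only differences are organizational---you split cases on the value of $\pp(k)$ whereas the paper proves each implication separately, and your recursive derivations of $\alpha_{k+1}\equiv_2\s(k)$ and of $1\le T_k\le 2^k-1$ are slightly cleaner than the paper's explicit-sum arguments---but the mathematical content is the same.
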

\begin{proof}
Fix~$\vec{x}\in\{0,1\}^n$ and~$k\in[n]$.
We claim that~$\alpha_{k+1}\equiv_2 \s(k)$.
By induction on~$i$ and the definition of $\alpha_i$ in equation~\eqref{def alpha}, it is not hard to see that
\begin{align*}
	\alpha_i=\sum_{j=i}^n x_j\prod_{\ell=i}^{j-1}(1-2x_{\ell}) 
\end{align*}
holds for all~$i\in[n+1]$.
This yields
\begin{align*}
\alpha_{k+1}&{=}\sum_{j=k+1}^n x_j\prod_{\ell=k+1}^{j-1}(1-2x_{\ell}) 
= \sum_{j\in\Delta_0}x_j - \sum_{j\in\Delta_1}x_j 
= \begin{cases} 0, \quad \text{if $\s(k)\equiv_2 0$}, \\ 1, \quad \text{if $\s(k)\equiv_2 1$},\end{cases}
\end{align*}
where we write
\[
\Delta_z\coloneqq\{j\in\{k+1,\ldots,n\}\colon |\{\ell\in\{k+1,\ldots,j-1\}\colon x_{\ell}=1\}|\equiv_2 z\}
\]
for~$z\in\{0,1\}$.
This proves our claim that~$\alpha_{k+1}\equiv_2 \s(k)$.

For the first part of the equivalence, assume that~$\s(k)\equiv_2 x_k$ and~$\pp(k)=1$. 
By~$\pp(k)=1$, we have~$\sum_{i=1}^{k-2}x_i+1-x_{k-1}=0$, so Lemma~\ref{lemma: partial F} yields
\begin{equation}\label{eq: partial in some case}
\partial_k\fct n(\vec x) = (1-2\alpha_{k+1})\sum_{i=1}^k 2^{i-1}\prod_{j=i}^{k-1}(1-2x_j).
\end{equation}
By using~$\sum_{i=1}^{k-1} 2^{i-1}<2^{k-1}$ and~$\prod_{j=i}^{k-1}(1-2x_j)\in\{\pm 1\}$ for all~$i\in[n+1]$, this yields~$\partial_k\fct n(\vec x)>0$ if~$\alpha_{k+1}=0$, and~$\partial_k\fct n(\vec x)<0$ if~$\alpha_{k+1}=1$.
Observe that we have~$\alpha_i\in\{0,1\}$ for all~$i\in[n+1]$, e.g., by induction on~$i$ starting with~$i=n+1$, so~$x_k\equiv_2\s(k)\equiv_2\alpha_{k+1}$ implies~$x_k=\alpha_{k+1}$.

For the second part of the equivalence, assume that we have~$\partial_k \fct n(\vec x)>0$ and~$x_k=0$, or~$\partial_k \fct n(\vec x)<0$ and~$x_k=1$.
By Lemma~\ref{lemma: partial F} and~$\sum_{i=1}^k 2^{i-1}<2^k$, this yields~$\sum_{i=1}^{k-2}x_i+1-x_{k-1}=0$, which implies~$\pp(k)=1$.
Then, Lemma~\ref{lemma: partial F} and~$\sum_{i=1}^{k-1} 2^{i-1}<2^{k-1}$ yield~$\sign (1-2\alpha_{k+1}) = \sign \partial_k \fct n(\vec x)$, where~$\sign$ denotes the signum function\footnote{The signum function~$\sign\colon\R\to\N$ is given by~$\sign( x)=-1$ if~${x}<0$, ~$\sign( x)=0$ if~${x}=0$, and~$\sign( x)=1$ if~${x}>0$.}, so we obtain~$\alpha_{k+1}=0$ if~$\partial_k\fct n(\vec x)>0$, and~$\alpha_{k+1}=1$ if~$\partial_k\fct n(\vec x)<0$.
By assumption, this yields~$x_k=\alpha_{k+1}\equiv_2 \s(k)$.
\qed
\end{proof}

We obtain our proposition by showing that there is a unique dimension~$k$ that satisfies condition~$(ii)$ from Lemma~\ref{lemma: k-improvable iff} in each vertex that is not optimal.

~

\noindent\textit{Proof of Proposition~\ref{proposition: exactly one}.}
We assume~$n\geq 3$, the cases~$n\in\{1,2\}$ can be checked directly.
We start with two technical observations.
First, fix some~$\vec x\in\{0,1\}^n$ and~$1<k_1<k_2\leq n$.
We can easily see that~$\pp(k_1)=1$ implies~$x_{k_1-1}=1$, while~$\pp(k_2)=1$ implies~$x_{k_1-1}=0$.
Since~$\pp(k)\in\{0,1\}$ for all~$k\in[n]$, this yields the following:
\begin{equation}\label{eq: note 1}
\forall\vec x\in\{0,1\}^n,1<k_1<k_2\leq n\colon\pp(k_1)=0\lor\pp(k_2)=0.
\end{equation}

Now fix some~$\vec x\in\{0,1\}^n$ with~$\s(1)\equiv_2 x_1$.
Assume that~$\pp(k)=1$ holds for some fixed~$k\in\{2,\ldots,n\}$.
Then, we have~$x_{k-1}=1$ and~$x_j=0$ holds for all~$j\in[k-2]$.
Thus,~$\s(1)\equiv_2 x_1$ implies~$\s(k)\equiv_2 x_k+1$, which gives us the following:
\begin{equation}\label{eq: note 2}
\forall\vec x\in\{0,1\}^n \text{ with }\s(1)\equiv_2 x_1, ~\forall k\in\{2, \ldots, n\}\colon\pp(k)=0\lor\s(k)\equiv_2 x_k+1.
\end{equation}

For all~$\vec x\in\{0,1\}^n$, equations~\eqref{eq: note 1} \& \eqref{eq: note 2} yield that we have~$\s(k)\equiv_2 x_k$ and~$\pp(k)=1$ for {at most one}~$k\in[n]$.
	
Now consider some~$\vec x\in\{0,1\}^n\setminus\{\vec{e^n}\}$.
It remains to argue that there exists some~$k\in[n]$ with~$\s(k)\equiv_2 x_k$ and~$\pp(k)=1$.
We may assume~$\s(1)\equiv_2 x_1+1$ since~$\pp(1)=1$.
This yields the existence of some~$j\in[n]$ with~$x_j=1$. 
Further, we have~$k\coloneqq\min\{j\in[n]\colon x_j=1\}<n$ as~$\vec x\neq \vec{e^n}$, hence~$\pp(k+1)=1$ and~$\s(k+1)\equiv_2 x_{k+1}$.
The statement now follows directly by Lemma~\ref{lemma: k-improvable iff}.	
\qed

~

Let~$k$ be the unique dimension from Proposition~\ref{proposition: exactly one} for some vertex~$\vec x$ that is not optimal. 
We show that the value of~$\partial_k\fct n(\vec x)$ does not depend on~$x_k$.

\begin{restatable}{proposition}{propmovetovertex}\label{proposition: move to next vertex}
Let~$n\in\N$ and~$\vec x\in\{0,1\}^n\setminus\{\vec{e^n}\}$.
Let~$k\in[n]$ be the unique dimension from Proposition~\ref{proposition: exactly one}.
Then, we have
\[
\partial_k\fct n(\vec x)=\partial_k\fct n(\vec x+\mu\vec{e^k})
\]
for all~$\mu\in\R$.
\end{restatable}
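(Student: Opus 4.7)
The plan is to verify that in $\partial_k\fct n(\vec y)$ with $\vec y\coloneqq\vec x+\mu\vec{e^k}$, every contributing term is either independent of $\mu$ or vanishes identically. The relevant facts are that $y_i = x_i \in \{0,1\}$ for $i\neq k$ while $y_k = x_k+\mu$ is unconstrained, and that Proposition~\ref{proposition: exactly one} combined with Lemma~\ref{lemma: k-improvable iff} guarantees $\pp(k)=1$ for our distinguished $k$.

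For the $\alpha$-part, the derivations in the proof of Lemma~\ref{lemma: partial F} produce the functional identities $\partial_k\alpha_{n,i}\equiv 0$ for $i>k$ and $\partial_k\alpha_{n,i} = (1-2\alpha_{n,k+1})\prod_{j=i}^{k-1}(1-2x_j)$ for $i\leq k$, valid at every point of $\R^n$. Since $\alpha_{n,k+1}$ depends only on coordinates $x_{k+1},\ldots,x_n$ and the product ranges only over coordinates $x_1,\ldots,x_{k-1}$, none of these partials depend on $x_k$, so their values at $\vec x$ and at $\vec y$ agree.

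For the $\beta$-part, $\partial_k\beta_{n,i}\equiv 0$ for $i<k$ trivially. For $i>k$, direct differentiation of $\beta_{n,i}(\vec y) = 2^i(y_i-y_i^2)(1-y_{i-1}+\sum_{j=1}^{i-2}y_j)$ with respect to $y_k$ produces a factor $y_i-y_i^2$, which equals $0$ since $y_i=x_i\in\{0,1\}$. The sole remaining term is
\[
\partial_k\beta_{n,k}(\vec y) = 2^k(1-2y_k)\left(1-y_{k-1}+\sum_{j=1}^{k-2}y_j\right),
\]
and the condition $\pp(k)=1$ forces $x_{k-1}=1$ and $x_j=0$ for all $j\in[k-2]$, so the trailing factor evaluates to $0$. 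Summing all contributions yields $\partial_k\fct n(\vec y)=\partial_k\fct n(\vec x)$ for every $\mu\in\R$. The main substantive step is recognizing $\pp(k)=1$ as the precise structural condition that kills the only $x_k$-dependent $\beta$-term; every other simplification is an immediate consequence of having $\{0,1\}$-values at the remaining coordinates.
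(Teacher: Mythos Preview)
Your proof is correct and follows essentially the same approach as the paper: both use Lemma~\ref{lemma: k-improvable iff} to obtain $\pp(k)=1$, which annihilates the only $x_k$-dependent contribution (from $\partial_k\beta_{n,k}$), and then observe that the remaining $\alpha$-derivative formula involves only $\alpha_{k+1}$ and $x_1,\dots,x_{k-1}$. Your write-up is in fact more explicit than the paper's---you verify directly that the identities from the proof of Lemma~\ref{lemma: partial F} continue to hold at $\vec y=\vec x+\mu\vec{e^k}$ (where only the coordinates $y_i$ with $i\neq k$ are guaranteed to lie in $\{0,1\}$), whereas the paper simply invokes equation~\eqref{eq: partial in some case} and remarks that it is independent of $x_k$.
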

\begin{proof}
By Lemma~\ref{lemma: k-improvable iff}, we have $\s(k)\equiv_2 x_k$ and~$\pp(k)=1$.
Hence, equation~\eqref{eq: partial in some case} holds in~$\vec x$.
In particular, the value of~$\partial_k\fct n(\vec x)$ does not depend on the value of~$x_k$.
\qed
\end{proof}

\subsection{Visiting all vertices of the hypercube}\label{sec: hamiltonian}
We will see that \textsc{ActiveSet} applied to~$\fct n$ over the hypercube and starting in~$\vec{0}$ moves from one vertex to another along the unique improving edges we identified in Section~\ref{sec: move to vertex}.
This trajectory results in a Hamiltonian path in the polyhedral graph of~$[0,1]^n$, i.e., in a path along the edges of the hypercube that visits all of its~$2^n$ vertices exactly once.

\begin{proposition}\label{proposition: Hamiltonian Path}
    Let~$n\in\N$.
    The sequence~$(\vec{x^i})_{i\in[2^n]}\subset\{0,1\}^n$ given by~$\vec{x^1}\coloneqq\vec 0$ and
    \begin{equation}\label{eq: hamiltonian sequence}
    \vec{x^{i+1}}\coloneqq\vec{x^i}+(1-2(x^i)_{k_i})\vec{e^{k_i}}, \quad \forall i\in[2^n-1],   
    \end{equation}
where~$k_i\in[n]$ is the unique dimension of Proposition~\ref{proposition: exactly one} in~$\vec{x^i}$, forms a Hamiltonian path in the polyhedral graph of~$[0,1]^n$ ending in~$\vec{e^n}$.
\end{proposition}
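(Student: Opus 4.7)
The plan is to proceed by induction on $n$, exploiting a recursive, Gray-code-like description of the sequence: in dimension $n$, the first $2^{n-1}$ steps should reproduce the dimension-$(n-1)$ sequence inside the face $\{x_n=0\}$; then a single flip of coordinate $n$ at $(\vec{e^{n-1}}, 0)$ switches to the face $\{x_n=1\}$; and the remaining $2^{n-1}-1$ steps should traverse the dimension-$(n-1)$ sequence in reverse, arriving at $(\vec{0},1)=\vec{e^n}$. The base case $n=1$ is a direct computation: at $\vec{x^1}=\vec{0}$ the unique improving dimension is $k=1$, giving $\vec{x^2}=\vec{e^1}$.

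For the inductive step, write $(\vec{y^i})_{i\in[2^{n-1}]}$ and $k'_i$ for the sequence and the corresponding improving dimensions in dimension $n-1$. First I would show by a nested induction on $i$ that $\vec{x^i}=(\vec{y^i},0)$ for all $i\le 2^{n-1}$. The key observation is that for any $\vec{y}\in\{0,1\}^{n-1}$ and $k\le n-1$, the quantities $\pp_n((\vec{y},0),k)$ and $\s_n((\vec{y},0),k)$ coincide with $\pp_{n-1}(\vec{y},k)$ and $\s_{n-1}(\vec{y},k)$, while $\pp_n((\vec{y},0),n)=1$ holds only when $\vec{y}=\vec{e^{n-1}}$. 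Thus, as long as the current point is $(\vec{y^i},0)$ with $\vec{y^i}\neq\vec{e^{n-1}}$, Lemma~\ref{lemma: k-improvable iff} combined with the uniqueness assertion of Proposition~\ref{proposition: exactly one} identifies $k'_i$ as the unique improving dimension in dimension $n$ as well, so the next point is $(\vec{y^{i+1}},0)$. At step $2^{n-1}$ we reach $(\vec{e^{n-1}},0)$, where a direct check gives $\pp_n(\cdot,n)=1$ and $\s_n(\cdot,n)=0=x_n$, so by uniqueness $k=n$ is the improving dimension, and we move to $\vec{x^{2^{n-1}+1}}=(\vec{e^{n-1}},1)$.

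For the second half, I would analogously prove $\vec{x^{2^{n-1}+j}}=(\vec{y^{2^{n-1}-j+1}},1)$ by induction on $j\in\{1,\ldots,2^{n-1}\}$. The crucial computation now is that with $x_n=1$ we have $\s_n((\vec{y},1),k)=\s_{n-1}(\vec{y},k)+1$ for $k\le n-1$, which exactly reverses the parity condition of Lemma~\ref{lemma: k-improvable iff}. Using the fact that $\pp_{n-1}(\cdot,k'_{i-1})$ and $\s_{n-1}(\cdot,k'_{i-1})$ are insensitive to the value of the $k'_{i-1}$-th coordinate, I would verify that at $(\vec{y^i},1)$ the dimension $k'_{i-1}$ satisfies the conditions of Lemma~\ref{lemma: k-improvable iff}; by uniqueness it must be the improving dimension, so the next point is $(\vec{y^{i-1}},1)$. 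The dimension $k=n$ is excluded throughout this phase since $\s_n(\cdot,n)=0\not\equiv_2 1=x_n$.

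The main obstacle is the parity bookkeeping in the second half, in particular recognizing that the improving direction there precisely reverses the dimension-$(n-1)$ traversal. Once this is established, well-definedness of the sequence (i.e., $\vec{x^i}\neq\vec{e^n}$ for $i<2^n$), the final equality $\vec{x^{2^n}}=\vec{e^n}$, and Hamiltonicity all follow immediately from the recursive description, since consecutive points differ in exactly one coordinate and the two halves partition the hypercube according to the value of $x_n$.
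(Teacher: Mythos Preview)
Your proposal is correct and follows essentially the same approach as the paper: induction on the dimension, identifying the first half of the sequence with the lower-dimensional Hamiltonian path in the face $\{x_n=0\}$ via the invariance of $\pp$ and $\s$ under appending a zero, flipping coordinate~$n$ at $\vec{e^{n-1}}$, and then showing the second half reverses the first because setting $x_n=1$ flips the parity of $\s$ while $\pp$ and $\s$ at index $k'_{i-1}$ are insensitive to the $k'_{i-1}$-th coordinate. The only cosmetic difference is that the paper inducts from $n$ to $n+1$ and phrases the reversal slightly more tersely.
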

\begin{proof}
	By Lemma~\ref{lemma: k-improvable iff} and Proposition~\ref{proposition: exactly one}, the choice of~$k_i\in[n]$ is indeed unique for~$\vec{x^i}\in\{0,1\}^n\setminus\{\vec{e^n}\}$ and we have~$\pp(k_i)=1$ and~$\s(k_i)\equiv_2 (x^i)_{k_i}$.
	We prove the statement by induction on~$n\in\N$. 
	It obviously holds for~$n=1$, so assume it holds for some fixed~$n\in\N$.
	
	Consider the sequence~$(\vec{x^i})_{i\in[2^{n+1}]}\subset\{0,1\}^{n+1}$ as defined in the statement.
	By induction, the subsequence~$(\vec{x^i})_{i\in[2^{n}]}$ forms a Hamiltonian path in the polyhedral graph of~$[0,1]^n$ ending in~$\vec{x^{2^n}}=\vec{e^n}$, since the additional entry~$(x^i)_{n+1}=0$ does not change the value of~$\pp(k)$ or~$\s(k)$ for any~$k\in[n]$.
	Further, since we have~$\pp_{n+1}(\vec{e^{n}},n+1)=1$ and~$\s_{n+1}(\vec{e^n},n+1)=0= (\vec{e^n})_{n+1}$, it holds that~$k_{2^n}=n+1$ and thus~$\vec{x^{2^{n}+1}}=\vec{e^{n}}+\vec{e^{n+1}}$.
	
	For the remaining part of the sequence, note that having~$(x^i)_{n+1}=1$ instead of~$(x^i)_{n+1}=0$ in the additional dimension does not change~$\pp_{n+1}(\vec{x^i},k)$ and flips the value of~$(\s_{n+1}(\vec{x^i},k)\mod 2)\in\{0,1\}$ for all~$k\in[n]$.
	
	By equation~\eqref{eq: hamiltonian sequence}, for all~$i\in[2^{n}-1]$, we have~$\pp_{n+1}(\vec{x^i},k_i)=\pp_{n+1}(\vec{x^{i+1}},k_i)$ and~$\s_{n+1}(\vec{x^i},k_i)=\s_{n+1}(\vec{x^{i+1}},k_i)$.
	Therefore, the second half of the sequence is the inverse of the first half; more precisely, we have
	\[
	(\vec{x^i})_{i\in\{2^{n}+1,2^n+2,\ldots,2^{n+1}\}}= (\vec{y^k})_{k\in[2^{n}]},
	\] 
	where~$\vec{y^k}\coloneqq \vec{x^{2^{n}+1-k}}+\vec{e^{n+1}}$ for all~$k\in[2^{n}]$.
	In particular,~$(\vec{x^i})_{i\in[2^{n+1}]}$ forms a Hamiltonian path in the polyhedral graph of~$[0,1]^{n+1}$ ending in~$\vec{e^{n+1}}$.
	\qed
\end{proof}

\subsection{Proofs of the super-polynomial bounds}\label{sec: main results}

We define a matrix~$C\in\{-1,0,1\}^{2n\times n}$ by~$\vec{C_{i\cdot}}=(\vec{e^i})^{\top}$ and~$\vec{C_{(i+n)\cdot}}=-(\vec{e^i})^{\top}$ for all~$i\in[n]$, and a vector~$\vec c\in\{0,1\}^{2n}$ by~$c_i=1$ and~$c_{i+n}=0$ for all~$i\in[n]$.
So we have
\[
[0,1]^n=\{\vec x\colon C\vec x\leq \vec c\}.
\]
Fixing this implementation of the hypercube, we can now state our main result formally.
For the proof, we combine the insights of the previous subsections.

~

\noindent \textbf{Theorem~\ref{thm:main}.} 
\textit{
For all~$n\in\N$,~$\textsc{ActiveSet}(\fct n,C,\vec c,\vec{0})$ solves
\begin{align*}
    \max \quad & \fct n(\vec x) \\
    \text{s.t.} \quad & \vec x \in [0,1]^n
\end{align*}
in~$2^n-1$ iterations, irrespective of the pivot rule.}
\begin{proof}
We show that \textsc{ActiveSet} visits all vertices of the hypercube~$[0,1]^n$.
Let~$(\vec{x^i})_{i\in[2^n]}$ denote the sequence of Proposition~\ref{proposition: Hamiltonian Path}.
The initial solution is given by~$\vec 0=\vec{x^1}$.
Now fix some~$i\in[2^n-1]$ and assume that~$\vec{x^i}$ is the intermediate solution at the start of iteration~$i$ of $\textsc{ActiveSet}(\fct n,C,\vec c,\vec{0})$ and that the current active set is~$\mathcal{A}=\Eq(\vec{x^i})\coloneqq\{j\colon\vec{C_{j\cdot}}\vec{x^i}=c_j\}$.
Note that, according to Proposition~\ref{proposition: Hamiltonian Path}, we have~$\vec{x^i}\neq\vec{e^n}$ so we can fix~$k\in[n]$ as the unique dimension of Proposition~\ref{proposition: exactly one} in the point~$\vec{x^i}$.

All edge directions in~$\vec{x^i}$ are of the form
\[
\vec{d^{j,\lambda}}\coloneqq \lambda(1-2(x^i)_j)\vec{e^j}
\] 
for some~$j\in[n]$ and some~$\lambda>0$.
Note that~$\partial_j \fct n=\vec{\nabla\!\fct n}\!^{\top}\vec{e^j}=-\vec{\nabla\!\fct n}\!^{\top}(-\vec{e^j})$ for all~$j\in[n]$, so we have~$\vec{\nabla\!\fct n}(\vec{x^i})\!^{\top}\vec{d^{j,\lambda}}>0$ if and only if~$\partial_j \fct n(\vec{x^i})>0$ and~$(x^i)_j=0$, or~$\partial_j \fct n(\vec{x^i})<0$ and~$(x^i)_j=1$.
Hence, by Proposition~\ref{proposition: exactly one}, the improving edge directions in~$\vec{x^i}$ are given by~$\vec{d^{k,\lambda}}=\lambda(1-2(x^i)_k)\vec{e^k}$ for~$\lambda>0$.

Since the system~$C\vec x\leq \vec c$ is non-degenerate and~$\mathcal{A}=\Eq(\vec{x^i})$, every feasible direction~$\vec d$ in the vertex~$\vec{x^i}$ satisfies
\[
|\{i\in\mathcal{A}\colon \vec{A_{i\cdot}}\vec{ d}=0\}|=|\mathcal{A}|-1
\]
if and only if it is an edge direction in~$\vec{x^i}$.
In particular, for every edge direction~$\vec d$ in~$\vec{x^i}$, there is a unique index~$\ell\in A$ with~$A_{\ell\cdot}\vec d<0$ and~$A_{\mathcal{A}\setminus\{\ell\}\cdot}\vec d=\vec 0$.
Further, the system~$A_{\mathcal{A}\cdot}\vec d=\vec 0$ only has the trivial solution.
Hence, taking an arbitrary feasible improving direction maximizing~$|\{i\in\mathcal{A}\colon \vec{A_{i\cdot}}\vec{ d}=0\}|$ is equivalent to taking an arbitrary improving edge direction.

This yields that \textsc{ActiveSet} chooses the improving edge direction~$\vec d\coloneqq\vec{d^{k,\lambda}}$ for some fixed~$\lambda>0$ in iteration~$i$ and deletes the unique~$\ell\in\mathcal{A}$ with~$A_{\ell\cdot}\vec{d}<0$.
We have~$A_{\mathcal{A}\setminus\{\ell\}\cdot}\vec{d}=\vec 0$ and, by Proposition~\ref{proposition: move to next vertex},
\[
\vec{\nabla\!\fct n(x^i}+\mu \vec{d)}\!^{\top}\vec{d}=\vec{\nabla\!\fct n}(\vec{x^i})\!^{\top}\vec{d}> 0
\]
for all~$\mu\in\R$.
Hence, \textsc{ActiveSet} moves from~$\vec{x^i}$ along~$\vec{d}$ in iteration~$i$ until reaching the next vertex
\[
\vec{x^i}+\vec{d}=\vec{x^i}+(1-2(x^i)_k)\vec{e^k}=\vec{x^{i+1}}.
\]
Finally, since~$\vec{\nabla\!\fct n(x^{i+1})}\!^{\top}\vec{d}> 0$, an index~$j\in\Eq(\vec{x^{i+1}})\setminus\mathcal{A}$ is added to~$\mathcal{A}$.
In fact, since the system~$C\vec x\leq \vec c$ is non-degenerate, this choice of~$j$ is unique, i.e., \textsc{ActiveSet} starts iteration~$i+1$ in the point~$\vec{x^{i+1}}$ with~$\mathcal{A}=\Eq(\vec{x^{i+1}})$.

We have seen that \textsc{ActiveSet} transitions from~$\vec{x^i}$ to~$\vec{x^{i+1}}$ in iteration~$i$, where~$i\in[2^n-1]$ was chosen arbitrarily.
This proves the statement since the optimum solution of the given program is~$\vec{x^{2^n}}=\vec{e^n}$ by Lemma~\ref{lemma: maximum}.
\qed
\end{proof}

If we are willing to relax the exponential bound from Theorem~\ref{thm:main}, we can lower the degrees of the polynomials.
In Corollary~\ref{cor:main2}, the degrees are chosen just large enough to yield a super-polynomial bound.

~

\noindent\textit{Proof of Corollary~\ref{cor:main2}.}
Let~$d(n)=\omega(\log n)$ with~$d(n)\leq n$ for all~$n\in\N$.
Since the additional dimensions do not affect its behavior, by Theorem~\ref{thm:main}, \textsc{ActiveSet} started in~$\vec 0$ solves
\begin{align*}
    \max \quad & \fct {\lfloor d(n)\rfloor}(x_1,\dots,x_{\lfloor d(n)\rfloor}) \\
    \text{s.t.} \quad & \vec x \in [0,1]^n
\end{align*}
in~$2^{\lfloor d(n)\rfloor}-1$ iterations for all~$n\in\N_{>2}$.
Since~$2^{\lfloor d(n)\rfloor}=2^{\omega(\log n)}$ and the polynomials~$\left(\fct {\lfloor d(n)\rfloor}\right)_{n\in\N}$ are of degrees~${\lfloor d(n)\rfloor}=\mathcal{O}(d(n))$, this concludes the proof.
\qed

\subsection{The complexity of maximizing over the hypercube}\label{sec: complexity}

The lower bounds in Theorem~\ref{thm:main} and Corollary~\ref{cor:main2} are unconditional, meaning they show that \textsc{ActiveSet} is inefficient on the hypercube even if~$\p=\np$.
It is folklore that the underlying decision problem is already~\np-hard for polynomials of degree~$3$. 
We include a proof to make our presentation self-contained.

\begin{restatable}{proposition}{propnpcomplete}\label{prop: NP complete}
Given a multivariate polynomial~$f\colon\R^n\to\R$ of degree~$3$, it is~\np-complete to decide whether there is~$\vec x\in\{0,1\}^n$ with~$f(\vec x)\geq 0$.
\end{restatable}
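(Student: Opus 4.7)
The plan is to establish NP-membership and NP-hardness separately. NP-membership is immediate: given a candidate $\vec x \in \{0,1\}^n$ as certificate, I can evaluate $f(\vec x)$ in polynomial time since a degree-$3$ polynomial in $n$ variables has at most $\mathcal{O}(n^3)$ monomials, each contributing a constant-size product.

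For NP-hardness, I would reduce from 3-SAT. Given a 3-CNF formula $\varphi$ with clauses $C_1, \ldots, C_m$ over variables $x_1, \ldots, x_n$, I will encode each literal $\ell$ as a degree-$1$ polynomial $L(\vec x)$: set $L(\vec x) = x_i$ if $\ell = x_i$ and $L(\vec x) = 1 - x_i$ if $\ell = \neg x_i$, so that $L(\vec x) \in \{0,1\}$ on boolean inputs with $L(\vec x) = 1$ iff the literal is satisfied. For each clause $C_j = \ell_{j,1} \vee \ell_{j,2} \vee \ell_{j,3}$, I would define
\[
p_j(\vec x) := \bigl(1 - L_{j,1}(\vec x)\bigr)\bigl(1 - L_{j,2}(\vec x)\bigr)\bigl(1 - L_{j,3}(\vec x)\bigr),
\]
so that $p_j(\vec x) \in \{0,1\}$ for $\vec x \in \{0,1\}^n$ with $p_j(\vec x) = 0$ iff clause $C_j$ is satisfied. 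Then I would take
\[
f(\vec x) := -\sum_{j=1}^m p_j(\vec x),
\]
a multivariate polynomial of degree~$3$ that is non-positive on $\{0,1\}^n$ and vanishes precisely when every clause is satisfied. Consequently, there exists $\vec x \in \{0,1\}^n$ with $f(\vec x) \geq 0$ if and only if $\varphi$ is satisfiable, and the reduction clearly runs in polynomial time.

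I do not anticipate a substantial obstacle here; the construction is textbook. The only things requiring care are (i) verifying that the degree is exactly $3$ after expansion (rather than higher) and (ii) confirming that $f$ can be written explicitly with at most $8m$ monomials so that its description has polynomial size. Combined with NP-membership, this yields \np-completeness.
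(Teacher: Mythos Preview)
Your proof is correct and essentially identical to the paper's: both argue NP-membership via efficient evaluation and reduce from 3\textsc{SAT} by setting $f(\vec x)=-\sum_j p_j(\vec x)$, where $p_j$ is the product of the three ``negated-literal'' factors $(1-x_k)$ for positive literals and $x_\ell$ for negated ones, so that $f\leq 0$ on $\{0,1\}^n$ with equality exactly on satisfying assignments. The only cosmetic difference is that you phrase the literal encoding via $1-L_{j,t}$ whereas the paper writes the two products over positive and negative literals explicitly.
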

\begin{proof}
We can evaluate polynomials of degree~$3$ efficiently, so the decision problem is in \np.
For hardness, we reduce from 3\textsc{SAT}~\cite{Cook1971}.
Let an instance for 3\textsc{SAT} be given by clauses~$\mathcal{C}=\{C_j\}_{j=1,\ldots,m}$ over variables~$\mathcal{Z}=\{z_i\}_{i=1,\ldots,n}$.
Define the function~$f\colon\R^n\to\R$ by
\[f(\vec x)=-\sum\limits_{j=1}^m \prod_{\{k\in[n]\colon z_k\in C_j\}}\prod_{\{\ell\in[n]\colon \neg{z_{\ell}}\in C_j\}}(1-x_k)x_{\ell}.\]
Then, for all~$\vec x\in\{0,1\}^n$, we have~$f(\vec { x})=0$  if and only if~$\sigma\colon\mathcal{Z}\to\{0,1\}$ given by~$\sigma(z_i)\coloneqq x_i$ is an assignment satisfying the 3\textsc{SAT} instance.
Since~$f(\vec x)\leq 0$ for all~$\vec { x}\in[0,1]^n$, we obtain \np-hardness of the decision problem.
\qed
\end{proof}

While the underlying problem is hard, the next statements show that our polynomial~$\fct n$ belongs to a class of functions for which linear time algorithms exist that find the optimum vertex of the hypercube.
We need some definitions before we can state this formally.

Every function~$f\colon \R^n\to\R$ attaining unique values on the vertices of~$[0,1]^n$ induces an {orientation}~$\sigma_f$ of the edges of the hypercube if we direct every edge~$\{\vec x,\vec y\}$ from~$\vec x$ to~$\vec y$ if~$f(\vec x)<f(\vec y)$, and from~$\vec y$ to~$\vec x$ if~$f(\vec y)<f(\vec x)$.
We write~$\vec{x}\to_{\sigma}\vec y$ if edge~$\{\vec x,\vec y\}$ is directed from~$\vec x$ to~$\vec y$ in~$\sigma$.

An orientation~$\sigma$ of the edges of the hypercube is called \emph{unique sink orientation} (USO) if there is exactly one \emph{sink}, i.e., a vertex without outgoing edges, in every face of the hypercube.
For more details on USOs, see e.g.~\cite{gartner2006linear,szabo2001unique}.

We say that an orientation~$\sigma$ is \emph{combed} if there is a dimension~$i\in[n]$ such that all edges along dimension~$i$ point in the same direction, i.e., if we have
\[
\vec x\to_{\sigma}\vec x+(1-2x_i)\vec{e^i}
\]
for all~$\vec{x}\in\{0,1\}^n$ with~$x_i=\delta$, where~$\delta\in\{0,1\}$ is fixed.
An orientation is called \emph{decomposable} if it is combed on every nonzero-dimensional subcube of~$[0,1]^n$.

\begin{restatable}{proposition}{propuso}\label{prop: induces USO}
The function~$\fct n$ induces a decomposable orientation of~$\{0,1\}^n$.
\end{restatable}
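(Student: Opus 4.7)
The plan is to exploit the fact that $\fct n$ collapses to a simple closed-form expression on the vertices of the hypercube, and then induct on~$n$. I would first observe that on any $\vec x\in\{0,1\}^n$ every $\beta_{n,i}(\vec x)$ vanishes (because $x_i-x_i^2=0$), and, using the parity identity $\alpha_{n,i}(\vec x)=\bigl(\sum_{j=i}^n x_j\bigr)\bmod 2$ already established within the proof of Lemma~\ref{lemma: k-improvable iff}, one obtains
\[
\fct n(\vec x)\;=\;\sum_{i=1}^n 2^{i-1}\Big(\sum\nolimits_{j=i}^n x_j\bmod 2\Big)\qquad\text{for } \vec x\in\{0,1\}^n.
\]
This is the standard Gray-code-to-integer conversion, so $\fct n$ attains each of the values $0,1,\ldots,2^n-1$ exactly once on $\{0,1\}^n$ and $\sigma_{\fct n}$ is well-defined.

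From this closed form two facet identities drop out: for every $\vec y\in\{0,1\}^{n-1}$ one has $\fct n(\vec y,0)=\fct{n-1}(\vec y)$ and $\fct n(\vec y,1)=2^n-1-\fct{n-1}(\vec y)$. I would then prove decomposability by induction on~$n$, with the base $n=1$ immediate. For the inductive step, let $F$ be a subcube of $[0,1]^n$ of dimension at least one.

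If $x_n$ is fixed in $F$, then $F$ lies in one of the facets $\{x_n=0\}$ or $\{x_n=1\}$. By the first identity, the orientation induced on $\{x_n=0\}$ is exactly $\sigma_{\fct{n-1}}$, decomposable by the inductive hypothesis; by the second identity, the orientation induced on $\{x_n=1\}$ is the edge-reversal of $\sigma_{\fct{n-1}}$, and reversing every edge of a decomposable orientation produces a decomposable orientation (the combed dimension on every subcube is preserved, with the arrow direction flipped). Either way $F$ is combed. If instead $x_n$ is free in $F$, I claim $F$ is combed along dimension~$n$: combining the two identities,
\[
\fct n(\vec y,1)-\fct n(\vec y,0)\;=\;2^n-1-2\fct{n-1}(\vec y)\;\geq\;1
\]
for every $\vec y\in\{0,1\}^{n-1}$, using $\fct{n-1}(\vec y)\leq 2^{n-1}-1$. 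Hence every edge of $F$ along dimension~$n$ is oriented from $x_n=0$ to $x_n=1$.

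The only non-mechanical ingredient is the Gray-code identification on vertices; once it is in hand, the rest is a short case split on whether $x_n$ is fixed or free in $F$, using the reversal-preserves-decomposability observation and the uniform bound $\fct{n-1}(\vec y)\leq 2^{n-1}-1$. I expect the collapse of $\fct n$ to the Gray code on vertices to be the conceptual heart of the argument, since it compresses the recursive definition of $\alpha_{n,i}$ into a form that interacts cleanly with facet restrictions.
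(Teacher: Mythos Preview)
Your proposal is correct, and its core observation---that $\beta_i$ vanishes on vertices so $\fct n(\vec x)=\sum_i 2^{i-1}\alpha_i(\vec x)$, and that the orientation is combed along the highest free coordinate---coincides with the paper's. The paper, however, argues directly rather than by induction on $n$: given an arbitrary subcube $\mathcal S$ with largest free coordinate $i_k$, it notes that $\alpha_{i_k+1}$ is constant on $\mathcal S$ (depending only on the fixed coordinates above $i_k$), so $\alpha_{i_k}$ takes one value on the facet $\{x_{i_k}=0\}\cap\mathcal S$ and the complementary value on $\{x_{i_k}=1\}\cap\mathcal S$; the inequality $\sum_{i<i_k}2^{i-1}<2^{i_k-1}$ then forces all dimension-$i_k$ edges of $\mathcal S$ to point the same way. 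Your inductive route arrives at the same conclusion via the facet identities $\fct n(\vec y,0)=\fct{n-1}(\vec y)$ and $\fct n(\vec y,1)=2^n-1-\fct{n-1}(\vec y)$, and in the process makes the Gray-code structure explicit, yielding the bijection onto $\{0,\dots,2^n-1\}$ and hence well-definedness of $\sigma_{\fct n}$ for free. The paper's direct argument is a bit leaner (no separate ``reversal preserves decomposability'' step), while yours surfaces more structural information about $\fct n$ on vertices.
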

\begin{proof}
Let~$I=\{i_1,\ldots,i_k\}$ with~$1\leq i_1<\ldots<i_k\leq n$ denote the support of some subcube~$\mathcal{S}$ of~$\{0,1\}^n$, i.e., the subcube~$\mathcal{S}\subset\{0,1\}^n$ is $k$-dimensional and, for all~$i\in [n]\setminus I$, there is~$\delta_i\in\{0,1\}$ such that~$x_i=\delta_i$ for all~$\vec{x}\in\mathcal{S}$.
We will show that~$\mathcal{S}$ is combed in dimension~$i_k$ with respect to the orientation~$\sigma$ induced by~$\fct n$.
Let~$\mathcal{S}_0\coloneqq\{\vec x\in\mathcal{S}\colon x_{i_k}=0\}$ and~$\mathcal{S}_1\coloneqq\{\vec x\in\mathcal{S}\colon x_{i_k}=1\}$.

Since~$\beta_i=0$ on~$\{0,1\}^n$ for all~$i\in[n]$, we have~$\fct n(\vec x)=\sum_{i=1}^n 2^{i-1}\alpha_i(\vec x)$ for all~$\vec x\in\{0,1\}^n$.
Fix some arbitrary~$\vec x\in\mathcal{S}_0$ and~$\vec y\in\mathcal{S}_1$.
Then,~$x_j=y_j$ for all~$j>i_k$ and thus~$\alpha_i(\vec x)=\alpha_i(\vec y)$ for all~$i>i_k$.
Therefore, we have
\[
	\alpha_{i_k}(\vec x)=\alpha_{i_k+1}(\vec x)=\alpha_{i_k+1}(\vec y)=1-\alpha_{i_k}(\vec y).
\]
As~$\alpha_i\in\{0,1\}$ on~$\{0,1\}^n$ for all~$i\in[n+1]$ and~$\sum_{i=1}^{i_k-1}2^{i-1}<2^{i_k-1}$, this yields that we either have~$\fct n(\vec x)>\fct n(\vec y)$ for all~$\vec x\in\mathcal{S}_0$ and~$\vec y\in\mathcal{S}_1$ or~$\fct n(\vec x)<\fct n(\vec y)$ for all~$\vec x\in\mathcal{S}_0$ and~$\vec y\in\mathcal{S}_1$, i.e., the subcube~$\mathcal{S}$ is combed with respect to~$\sigma$.
\qed
\end{proof}

Every decomposable orientation is also a USO, and can be solved in linear time~\cite{SchurrS04}.
Hence, the polynomials of Theorem~\ref{thm:main} and Corollary~\ref{cor:main2} do not induce \np-hard optimization problems (unless~$\p = \np$).

\section{Future research}

In this paper, we considered the active-set method as a natural generalization of the simplex method in order to derive the first unconditional running time bounds for all pivot rules.
A canonical next step towards a general bound for the simplex method would be to lower the degree of our polynomial.

\begin{question}
Is there a polynomial of degree $\mathcal{O}(\log n)$ over the hypercube for which active-set takes super-polynomially many iterations for all pivot rules?
\end{question}

It might be possible to reduce the complexity of the objective function even more drastically by increasing the complexity of the feasible region.

\begin{question}
Is there a polynomial of constant degree over some polytope for which active-set takes super-polynomially many iterations for all pivot rules?
\end{question}

While a general bound for linear objectives, i.e., for the simplex method, would be a massive breakthrough, reaching quadratic and concave objectives would already be very interesting:
Much like in the linear setting, only weakly polynomial algorithms are known, namely ellipsoid and interior-point methods~\cite{kozlov1979polynomial,ye1989extension}, and the active-set method is a promising candidate for a strongly polynomial algorithm, due to its combinatorial nature.

\begin{question}
Is there a concave, quadratic polynomial over some polytope for which active-set takes super-polynomially many iterations for all pivot rules?
\end{question}

%
\bibliographystyle{splncs04}
\bibliography{DisserMosis2025}
\end{document}